\documentclass[aps,pra,showpacs,twoside,twocolumn,10pt,nofootinbib]{revtex4-1}
\usepackage[colorlinks=true, citecolor=red, urlcolor=blue ]{hyperref}
\usepackage{epsfig,amssymb,amsfonts,amsmath,bm,subfigure,mathtools,amsthm,braket,soul,enumitem,xcolor,physics,graphics,graphicx}
\UseRawInputEncoding
\usepackage[normalem]{ulem}
\usepackage{comment}
\usepackage{epstopdf}
\usepackage{multirow}
\usepackage{cancel}
\usepackage{soul}

\newtheorem{theorem}{Theorem}

\newtheorem{definition}{Definition}
\newtheorem{proposition}{Proposition}

\newcommand{\sm}[1]{{\color{blue}#1}}

\newcommand{\ph}[1]{{\color{red}#1}}

\begin{document}

\title{More nonlocality with less incompatibility in higher dimensions: \\
Bell vs prepare-measure scenarios}
\author{Sudipta Mondal$^{1}$, Pritam Halder$^{1}$, Saptarshi Roy$^{2}$, Aditi Sen (De)$^{1}$}
\affiliation{$^1$Harish-Chandra Research Institute, A CI of Homi Bhabha National Institute, Chhatnag Road, Jhunsi, Prayagraj 211019, India\\
\(^2\)  QICI Quantum Information and Computation Initiative, Department of Computer Science,\\
The University of Hong Kong, Pokfulam Road, Hong Kong }

\begin{abstract}

Connecting incompatibility in measurements with the violation of local realism is one of the fundamental avenues of research. For two qubits, any incompatible pair of projective measurements can violate Clauser-Horne-Shimony-Holt (CHSH) inequality for some states, and there is a monotonic relationship between the level of measurement incompatibility (projective) and the violation. However, in the case of two qutrits, we exhibit that the violation of the Collins-Gisin-Linden-Massar-Popescu (CGLMP) inequality responds non-monotonically with the amount of incompatibility; we term this {\it more nonlocality with less incompatibility}. Furthermore, unlike in the CHSH case, the maximally violating state in higher dimensions depends on the amount of measurement incompatibility. We illustrate that similar patterns can also be observed in an experimentally viable interferometric measuring technique. In such a measurement scenario, we provide an explicit example of incompatible (not jointly measurable) measurements that do not violate the CGLMP inequality for any shared quantum state. We extend our study of incompatibility in the prepare and measure scenario, focusing on quantum random access codes (QRACs). Surprisingly, we show that the monotonicity of average success probability with measurement incompatibility does not hold for higher dimensions, as opposed to two dimensions, even though the maximum probability of QRAC behaves monotonically with incompatibility.




\end{abstract}

\maketitle

\section{Introduction}
\label{sec:intro}

One of the striking non-classical characteristics of quantum physics is the existence of incompatible measurements that cannot be measured jointly and simultaneously \cite{gunhe2023incompatibility}. Similar to other facets of nonclassicality like entanglement \cite{hhhh} and nonlocality \cite{Bell1964, Bruner2014bell}, measurement incompatibility has been shown to be advantageous in various information processing tasks such as state discrimination \cite{Barnett2009, Bae2015,sd1,sd2,sd3}, communication games including random access code (RAC) \cite{Ambainis2002,rac2,rac3} as well as demonstrating violations of Bell inequalities \cite{incombell, incombell1,Bene2018mi}  and quantum steering \cite{steering2, steering3, steering1}. As a result, it can be identified as a resource for which a resource theoretic framework has been developed \cite{buscemirt1}.


Exploring the connection between measurement incompatibility and other non-classical behaviors, like the violation of local realism, has become a pertinent avenue of research in recent years.  For the Clauser-Horne-Shimony-Holt (CHSH) inequality \cite{chsh}, it has been demonstrated that incompatible projective measurements necessarily lead to the violation of local realism \cite{Cirelson1980}. In a stronger sense, the obtained violation is monotonic with the amount of incompatibility in the measurements used to extract the statistics (at least in the projective case) \cite{Cirelson1980, wolf2009}. However, for generalized positive-operator-valued measurements (POVMs), such one-to-one correspondence fails to hold since examples of incompatible POVMs are found that are incapable of providing statistics for any shared state, leading to the violation of the CHSH inequality \cite{priya_ghosh2023, incombell, incombell1}. These findings, primarily focused on two-dimensional systems, highlight the need to explore similar connections in higher-dimensional scenarios, which remain underrepresented till date.

To investigate these subtleties more deeply, we focus on determining their relationship for higher dimensional Bell inequalities, particularly the Collins-Gisin-Linden-Massar-Popescu (CGLMP) inequalities  \cite{Collins2002bell}, which have already been demonstrated experimentally \cite{Lo2016experimental, Zhang2024experiment}. These inequalities were formulated to capture quantum correlations present in two qudits with greater efficiency and to provide enhanced robustness against noise \cite{Collins2002bell}. Even while the CGLMP inequality has been well examined in the literature in several scenarios \cite{cglmp2, cglmp3,cglmp4,cglmp5,roy2024robustness},  
there are still a number of uncharted areas that need to be addressed. Furthermore, qudit systems have been certified to offer more advantages compared to their qubit counterparts across a wide range of quantum information tasks including the robust test of quantum steering \cite{marciniak2015,Srivastav2022}, quantum simulation and computation \cite{Neeley2009,Kaltenbaek2010}, quantum secure key rates \cite{bechmann2000,cerf_qkd_prl_2002,nikolopoulos2006,sheridan2010,Sasaki2014,Bouchard2018,araujo2023},  quantum telecloning \cite{nagali2010,Bouchard2017}, and quantum thermal machines \cite{ correra2014,santos2019,Dou2020,ghosh2022, konar2023}. This growing number of works showing the necessity of increasing dimensions also makes it imperative to establish links between measurement incompatibility and violation of the Bell inequalities for higher dimensions \cite{Collins2002bell}.

 In this work, we provide a conclusive proof that incompatible projective measurements are necessary and sufficient to violate the CGLMP inequality with two qutrits. In contrast to the CHSH situation \cite{chsh}, we find that the violation of the CGLMP inequality is \textit{nonmonotonic} with the increase of incompatibility in projective measurements quantified via commutation based incompatibility measure \cite{Mordasewicz2022}. This leads to one of the central observations of our work, where for the CGLMP inequality, we report: \textit{more nonlocality with less incompatibility.} Interestingly, we also observe such nonmonotonic behavior when one also considers random projective measurements.
Another curious feature that stems from our analysis is the non-uniqueness of the maximally violating state. Unlike the CHSH inequalities, for which the maximally entangled state is the maximally violating state for any incompatible measurements, this is not true for the CGLMP inequality. Specifically, in this case, the maximally violating state depends on the incompatibility of the measurements employed to obtain the Bell statistics. We analyze non-monotonicity and nonuniqueness properties when the measurement choices are restricted to those that can be implemented in a photonic experimental setup \cite{Collins2002bell,Lo2016experimental, Zhang2024experiment}. {\color{black} 
We also provide an explicit example of POVMs that cannot be measured together yet do not violate the CGLMP inequality, emphasizing the inadequacy of measurement incompatibility to demonstrate a violation for higher dimensional Bell inequalities.}


On the application front, we investigate the role of measurement incompatibility in the prepare and measure (PM) scenarios \cite{Wiesner1983,Poderini2020,Sophie2024} in higher dimensions $(d > 2)$. This work focuses on a particular case of PM scenario, namely quantum random access codes (QRACs) \cite{Ambainis2002,rac2,rac3} -- the sender is restricted in sending quantum systems that do not share any prior correlation with the receiver while only classical correlations can be shared beforehand. 
{\color{black}
The examination of RACs in higher dimensions is particularly important because of the pre-existing connections between the Bell-CHSH game and RAC in terms of XOR games \cite{xor1,xor2} in $d=2$. Furthermore, other works \cite{archan, Sophie2024} also highlight an isomorphism between the Bell-CHSH game and RAC tasks.  The above results justify the significance of our investigation exploring the connection between RACs and nonlocality in higher dimensional settings with respect to measurement incompatibility.
}

In contrast to the Bell setting, the quantity of interest is the success probability that the receiver will obtain an outcome given that the inputs are received by the sender and the receiver as classical messages. 
Our investigation conclusively demonstrates that in dimensions greater than two, nonlocality and the success probability in the QRAC task display \textit{inequivalent} responses to measurement incompatibility. In particular, the upper bound of maximal average success probability in the QRAC task is monotonic with the amount of measurement incompatibility for projective measurements. However, our analysis also reveals the existence of projective measurements for which we get greater average success probability in RAC game with less incompatible measurements. Furthermore, we determine the maximum noise strength that can maintain the quantum advantage in the QRAC game and violate the CGLMP inequality, which turns out to be inequivalent in higher dimensions, in contrast to the two-dimensional case. Nonetheless, we construct measurement strategies with identical noise robustness in CGLMP and QRAC scenarios.

The contents of the paper are organized as follows. We investigate the connections between violations of the CGLMP inequality and measurement incompatibility in Sec. \ref{sec:incomvscglmp} when the measurement is restricted to projective measurements. Sec. \ref{sec:interferometric} discusses whether the findings for random projective measurements hold for measurements implementable in a photonic set-up. Before concluding \ref{sec:conclusion}, we establish a connection between measurement incompatibility and prepare-measure scenario, especially QRACs in Sec. \ref{sec:rac_incom}.

\section{Incompatibility vs. violations of CGLMP inequality: projective measurements}
\label{sec:incomvscglmp}

The central objective of this section is to investigate the connection between the violation of Bell inequality and the measurement incompatibility for two arbitrary \(d\)-dimensional systems \((d>2)\).  In particular, we consider the CGLMP inequality \cite{Collins2002bell} as the quantifier for exhibiting higher dimensional nonlocality, while the incompatibility of two measurements $M^1\equiv\{M^1_i\}$ and $M^2\equiv\{M^2_j\}$, acting on $d$-dimensional complex Hilbert space, $\mathbb{C}^d$, is defined as the Schatten $p$-norms \cite{Mordasewicz2022} of commutator between the elements of the above set of measurements. It  can mathematically be written  as
\begin{eqnarray}
    \mathcal{I}_p(M^1,M^2)= \sum_{j=1}^{d_1} \sum_{k=1}^{d_2} \|[M_j^1,M_k^2]\|_p,
\label{eq:incom_p_norm_1}
\end{eqnarray}
where for an arbitrary projector $A$, $||A||_p=\left[\tr[\sqrt{A^\dagger A}]^p\right]^{1/p}$ and $p\in[1,\infty]$ (see Appendix \ref{sec:incom} for its properties in detail). Here, $i,j$ denotes the outcomes of measurement $M^1$ and $M^2$ respectively. {\color{black}
The above notion of incompatibility generalizes and formalizes the idea of incompatibility of two observables $O^1$ and $O^2$ expressed in terms of their commutator $[O^1,O^2]$. In particular, if $O^1$ and $O^2$ are qubit observables with eigenvalues $\pm 1$, then we have
\begin{eqnarray}
    ||[O^1,O^2]||_p = \mathcal{I}_p(\mathcal{S}_{O^1},\mathcal{S}_{O^2}),
\end{eqnarray}
where $\mathcal{S}_{O^{1(2)}}$ is the set of all the projectors in the spectral decomposition of $O^{1(2)}$. 
Formally, when $O^1$ and $O^2$ are two $d$-dimensional observables with $1 \leq ||O^{1(2)}||_p \leq \lambda_{\max}^{1(2)}$, we have
\begin{eqnarray}
    ||[O^1,O^2]||_p \leq \lambda_{\max}^1 \lambda^2_{\max} \mathcal{I}_p(\mathcal{S}_{O^1},\mathcal{S}_{O^2}),
\end{eqnarray}
The above results can be obtained by using the spectral decomposition $O^{1(2)}= \sum_{i = 1}^d \lambda_i^{1(2)} \Pi^{1(2)}_i$ and the convexity of the Schatten $p$-norm.
From an operational perspective, $\mathcal{I}_p$ provides a necessary and sufficient criterion of joint measurability for two sets of projective measurements. 
In particular, we prove it in the Appendix. \ref{sec:incom} that any two projective measurements, $\Pi^1 = \{\Pi^1_i\}$ and $\Pi^2 = \{\Pi^2_j\}$ are jointly measurable if and only if $\mathcal{I}_p(\Pi^1,\Pi^2)$ vanishes identically.
}

Let us now briefly describe the CGLMP inequality and the corresponding form of the CGLMP operator required for our study.

\subsubsection*{CGLMP inequality}
We investigate the violations of local realism in higher dimensions $(d > 2)$ using the CGLMP inequality \cite{Collins2002bell}. It is a generalization of the CHSH inequality \cite{chsh} where measurements at each party can support an arbitrarily high number of outcomes.
Technically, it generalizes the standard 2 (party)-2 (settings)-2 (outcomes) configuration to a $2-2-d$ scenario. For a theory that satisfies local realism, the CGLMP expression \cite{Collins2002bell} $\chi_d$ satisfies the bound,
\begin{widetext}
\begin{eqnarray}
    \chi_{d}&=& \sum_{k=0}^{[\frac{d}{2}]-1}\bigg(1-\frac{2k}{d-1}\bigg)\{P(A^{1}=B^{1}+k) + P(B^{1}=A^{2}+k+1) + P(A^{2}=B^{2}+k) + P(B^{2}=A^{1}+k)\nonumber\\&& - [P(A^{1}=B^{1}-k-1) + P(B^{1}=A^{2}-k) + P(A^{2}=B^{2}-k-1) + P(B^{2}=A^{1}-k-1)]\}\leq2,
    \label{eq:cglmp}
\end{eqnarray}
\end{widetext}
where \(P(A^{a}=B^{b}+k)\) is the probability that the measurement outcomes of two distant parties, Alice and Bob differ by \(k\) mod \(d\), and is given by
\begin{eqnarray}
    P(A^{a}=B^{b}+k)&=&\sum_{j=0}^{d-1}P(A^{a}_{j+k\mod d},B^{b}_j).\nonumber\\
    \label{eq:single_probability}
\end{eqnarray}
Like in the CHSH inequality, in this case too, the maximum value of \(\chi_d\) is \(4\) while for local hidden variable theories (LHVT) the maximum that can be achieved is \(2\) \cite{Collins2002bell}.

We now intend to cast the CGLMP expression in an operator form to facilitate the examination of its connection with measurement incompatibility. Given a bipartite state, $\rho_{AB}\in \mathcal{C}^d\otimes\mathcal{C}^d$, let Alice $(A)$ and Bob $(B)$ be two spatially separated observers, allowed to perform two local measurements $A^a$ $(a=1,2)$ and $B^b$ $(b=1,2)$ respectively. Each of these measurements can have \(d\) possible outcomes. Therefore, the joint probability, $P(j,k|a,b)$ of detecting the output $j$ in Alice's side and $k$ in Bob's side can be expressed as
\begin{eqnarray}
    P(j,k|a,b) = P(A_j^a,B_k^b) = \Tr[A_j^a\otimes B_k^b\rho_{AB}],
\end{eqnarray}
where the subscripts, $j=0,1,\ldots,d-1$ and $k=0,1,\ldots,d-1$ denote the outcomes of $A^a$ and $B^b$ respectively. 

 In quantum theory, from Eq.~\eqref{eq:cglmp}, we can obtain the maximal value of the CGLMP expression by maximizing over all possible shared quantum states between Alice and Bob, i.e, 
\begin{eqnarray}
     \chi_d = \max_{\rho_{AB}}\Tr[\mathbf{C}_d \rho_{AB}],
     \label{eq:op_cglmp}
\end{eqnarray}
where the CGLMP operator $\mathbf{C}_d$ is given by
\begin{widetext}
\begin{eqnarray}
   \nonumber \mathbf{C}_d &=& \sum_{k=0}^{[\frac{d}{2}]-1}\bigg(1-\frac{2k}{d-1}\bigg)\sum_{j=0}^{d-1}\big(A^1_{j+k \mod d}\otimes B_j^1 + A^2_{j}\otimes B_{j+k+1 \mod d}^1 + A^2_{j+k\mod d}\otimes B_{j}^2 + A^1_{j}\otimes B_{j+k\mod d}^2\\ &-& A^1_{j-k-1 \mod d}\otimes B_j^1 - A^2_{j}\otimes B_{j-k \mod d}^1 - A^2_{j-k-1\mod d}\otimes B_{j}^2 - A^1_{j}\otimes B_{j-k-1\mod d}^2\big).
   \label{eq:cglmp_operator}
\end{eqnarray}
\end{widetext}
Below, we demonstrate the connection between the incompatibility of projective measurements and violation of local realism in the bipartite scenario.

\subsection{Is measurement incompatibility necessary and sufficient for violations of the CGLMP inequality?}
Let us now establish whether incompatible projective measurements are sufficient for violating the qudit CGLMP inequality. For this purpose, we consider arbitrary pairs of projective measurements on Alice and Bob's side, denoted by $(A^1, A^2)$ and $(B^1, B^2)$ respectively such that they are strictly incompatible, i.e,
\begin{eqnarray}
    \mathcal{I}_p(A^1, A^2) > 0, ~~\mathcal{I}_p(B^1, B^2) > 0 ~\forall p, 
\end{eqnarray}
where $\mathcal{I}_p$ represents the Schatten p-norm of incompatibility defined in Eq. \eqref{eq:incom_p_norm_1}, see Appendix for details. Note that from henceforth in this manuscript, we would denote $\mathcal{I}:=\mathcal{I}_\infty$, and it would be the measure of choice throughout the manuscript unless mentioned otherwise.
We parametrize the local measurements on Alice and Bob's side as
\begin{eqnarray}
A^a_j=U_{A^a}\ketbra{j}{j}U_{A^a}^\dagger, ~~   B^b_i=U_{B^b}\ketbra{k}{k}U_{B^b}^\dagger,
\label{eq:unitary_rand}
\end{eqnarray}
where $a,b = 1,2$ and $j,k = 0,1,2, \ldots  d-1$.
Here, each $U_{A^a(B^b)} \in U(d)$ 
can be parameterized by $d^2$ real parameters. 
We group these $ 4d^2$ measurement parameters in a vector $\bm{\mu}$.  
Now, let us define a set $\Theta$ containing all the possible measurement parametrization, $\Theta = \{ \bm{\mu}\}$. 
So, each instance of $\Theta$ defines one distinct measurement strategy employed by Alice and Bob. 
For a given strategy $\bm{\mu}' \in \Theta$, one can construct the CGLMP operator, $\mathbf{C}_d(\bm{\mu}')$ following Eq.~\eqref{eq:cglmp_operator}. According to Eq.~\eqref{eq:op_cglmp}, the maximal violation of the CGLMP inequality can be achieved when we take $\rho_{AB}$ as the eigenstate of $\mathbf{C}_d$ corresponding to the maximum eigenvalue. We now concern ourselves with a subset $\Theta_{\epsilon_A,\epsilon_B}$ of $\Theta$ $(\Theta_{\epsilon_A,\epsilon_B} \subset \Theta)$ that corresponds to only those local measurements of Alice and Bob that are incompatible by an amount $\epsilon_A > 0$ and $\epsilon_B>0$ respectively, i.e., $\mathcal{I}(A^1,A^2)= \epsilon_A$ and $\mathcal{I}(B^1,B^2) = \epsilon_B$. With this, we are set to formally declare what we mean by the sufficiency of measurement incompatibility in violating the CGLMP inequality.
\begin{definition}
    Measurement incompatibility is sufficient for the violation of the CGLMP inequality if $\chi_d>2$ for all $\bm{\eta} \in \Theta_{\epsilon_A,\epsilon_B}$ for any $\epsilon_A, \epsilon_B >0$.
\end{definition}
\noindent This is equivalent to checking whether $\chi_d^{\min}>2$, where 
\begin{eqnarray}
    \chi_d^{\min} := \min_{\bm{\eta} \in \Theta_{\epsilon_A,\epsilon_B}} \max_{\rho_{AB}} \text{Tr}[\mathbf{C}_d(\bm{\eta}) \rho_{AB}]. 
    \label{eq:idmin}
\end{eqnarray}
Since $\max_{\rho_{AB}} \text{Tr}[\mathbf{C}_3(\bm{\eta}) \rho_{AB}]$ is equivalent to finding out the maximal eigenvalue of $\mathbf{C}_3(\bm{\eta})$, the optimization dimension, i.e., the number of free parameters over which the minimization takes place is $4d^2$. This, in general, is a hard problem, but for $d = 3$, one is left with a minimization of $36$ parameters that can be tackled using standard optimization techniques. Here, the parametrization of $U(3)$ unitaries is given in Appendix ~\ref{sec:d3_local_unitary}. In particular, in our case, we used an algorithm that relies on an improved stochastic ranking evolution strategy ($\texttt{ISRES}$) \cite{Runarsson2005} for global optimization and yields $\chi_3^{\min}>2$ for all $\epsilon_A,\epsilon_B>0$. Specifically, we find $\chi_3^{\min}>2+ m_p$, where $m_p \sim 10^{-5}$ is the numerical precision used in our analysis. At any rate, the upshot of our analysis can be encapsulated in the following:
\begin{proposition}
    Incompatible projective measurements are sufficient to violate the CGLMP inequality for two qutrits.
    \label{prop:sufficient}
\end{proposition}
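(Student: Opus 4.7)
The plan is to reduce the claim $\chi_3^{\min}>2$ to a constrained numerical global minimization over a finite-dimensional parameter space, and to verify that the minimum strictly exceeds $2$ for every admissible pair $(\epsilon_A,\epsilon_B)$ of positive incompatibility levels.

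First, I would collapse the inner maximization in Eq.~\eqref{eq:idmin} using Hermiticity of $\mathbf{C}_3(\bm{\mu})$:
\begin{equation}
\max_{\rho_{AB}}\mathrm{Tr}[\mathbf{C}_3(\bm{\mu})\rho_{AB}]=\lambda_{\max}(\mathbf{C}_3(\bm{\mu})),
\end{equation}
attained on the corresponding eigenstate. The target therefore reduces to $\min_{\bm{\mu}\in\Theta_{\epsilon_A,\epsilon_B}}\lambda_{\max}(\mathbf{C}_3(\bm{\mu}))$, a scalar function on a constrained submanifold of $U(3)^{\otimes 4}$. Next, I would parametrize each local unitary $U_{A^a},U_{B^b}\in U(3)$ by nine real Euler-type angles (as spelled out in Appendix \ref{sec:d3_local_unitary}), producing an ambient vector $\bm{\mu}\in\mathbb{R}^{36}$. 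The measurement projectors follow from Eq.~\eqref{eq:unitary_rand}, and the equality constraints
\begin{equation}
\mathcal{I}(A^1,A^2)=\epsilon_A,\qquad \mathcal{I}(B^1,B^2)=\epsilon_B
\end{equation}
cut out the admissible region $\Theta_{\epsilon_A,\epsilon_B}$.

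I would then discretize the feasible range of $(\epsilon_A,\epsilon_B)$ on a fine grid bounded above by the maximal operator-norm commutator of rank-one qutrit projector families. At each grid point, I would execute the \texttt{ISRES} global non-convex optimizer on $\lambda_{\max}(\mathbf{C}_3(\bm{\mu}))$, enforcing the two equality constraints either through \texttt{ISRES}'s native stochastic ranking of constraint violations or via penalty terms. The acceptance criterion for the proposition is that the best value found at every grid point satisfies $\chi_3^{\min}>2+m_p$ with $m_p\sim 10^{-5}$ the numerical precision.

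The main obstacle is the usual one for non-convex optimization in moderate dimension: certifying that \texttt{ISRES} has actually located the global minimum and not a stubborn local one. I would mitigate this by (i) seeding with many randomized restarts together with structured initializations such as mutually unbiased bases on $\mathbb{C}^3$ at which the maximal CGLMP violation is known to be saturated, (ii) checking stability of the minimizer under small perturbations of both $\bm{\mu}$ and the constraint levels, and (iii) verifying continuity as $(\epsilon_A,\epsilon_B)\to 0^+$, where the minimum should approach $2$ since jointly measurable projectors (which by the appendix are exactly the $\mathcal{I}=0$ case) cannot violate any Bell inequality. Passing these consistency checks across the full grid, together with the strict separation $\chi_3^{\min}>2+m_p$, establishes the sufficiency claim in Proposition~\ref{prop:sufficient}.
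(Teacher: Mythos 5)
Your proposal follows essentially the same route as the paper: the inner maximization is collapsed to the largest eigenvalue of $\mathbf{C}_3(\bm{\mu})$, the local measurements are parametrized by the $36$ real parameters of four $U(3)$ unitaries, and the constrained minimization over strategies with fixed incompatibility $\epsilon_A,\epsilon_B>0$ is carried out numerically with the \texttt{ISRES} global optimizer, certifying $\chi_3^{\min}>2+m_p$ with $m_p\sim 10^{-5}$. Your added consistency checks (restarts, MUB seeding, continuity as the incompatibility tends to zero) are sensible refinements but do not change the argument.
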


The natural follow-up question is whether incompatibility in measurement is necessary.
Although the necessity of incompatibility has been highlighted at the level of hidden variable model \cite{steering2}, in Appendix \ref{app:necessary} we provide a direct proof of the same by looking at the form of the CGLMP expression itself. This leads to the following Proposition.
\begin{proposition}
    Incompatible projective measurements are necessary to violate the  CGLMP inequality for two qutrits.
    \label{prop:necessary}
\end{proposition}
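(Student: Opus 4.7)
The plan is to prove the contrapositive: if the pair of projective measurements on either side is compatible, then for every shared state $\rho_{AB}$ and every choice of measurements on the other side, the CGLMP value $\chi_3$ is bounded by the local-realistic value $2$. By the equivalence between joint measurability and $\mathcal{I}_p = 0$ for projective measurements recalled earlier in the excerpt, the compatibility of Alice's pair $(A^1, A^2)$ means there is a common eigenbasis $\{\ket{i}_A\}_{i=0}^{d-1}$ in which both $A^1$ and $A^2$ are diagonal. In that basis the outcomes $j$ of $A^a$ become deterministic functions $j = f_a(i)$ of the common index $i$, so $A^a_j = \sum_i \delta_{j, f_a(i)}\, \ketbra{i}{i}_A$. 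The whole point of Step~1 is to turn compatibility on one side into a classical shared variable.

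For Step~2, I would plug this decomposition into the joint probabilities that appear in the CGLMP expression \eqref{eq:cglmp}. A direct computation gives
\begin{equation*}
P(j, k \mid a, b) \;=\; \sum_{i=0}^{d-1} \pi_i\, D(j \mid a, i)\, P(k \mid b, i),
\end{equation*}
with $\pi_i = \tr\!\left[(\ketbra{i}{i}_A \otimes \mathbb{I}_B)\rho_{AB}\right]$, $D(j \mid a, i) = \delta_{j, f_a(i)}$, and $P(k \mid b, i) = \tr\!\left[B^b_k\, \sigma_i\right]$, where $\sigma_i$ is Bob's conditional state after Alice's parent outcome $i$. This is precisely a local hidden-variable (LHV) decomposition with hidden variable $i$, deterministic response on Alice's side, and quantum response on Bob's side (which is itself a convex combination of deterministic strategies, so the overall model is LHV).

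For Step~3, I would substitute this LHV form into $\chi_d$ and observe that $\chi_d$ becomes a convex mixture, indexed by $i$, of values each of which is at most $2$ by the standard local-realistic bound of Ref.~\cite{Collins2002bell}. Hence $\chi_d \le 2$, precluding any violation. The same argument with Alice and Bob interchanged handles the case when Bob's measurements are compatible, by the $A\leftrightarrow B$ symmetry of the CGLMP expression.

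The main obstacle is really located in Step~1: one must invoke the \emph{projective} nature of the measurements in order to convert compatibility into a genuine common eigenbasis and hence into \emph{deterministic} response functions $D(j\mid a, i)$. Without projectivity, joint measurability only guarantees a parent POVM with probabilistic post-processing, which still yields an LHV model but requires slightly more care. Once deterministic responses are in hand, the reduction to the LHV bound is routine. I would also remark that the argument is dimension-independent, so as a byproduct the necessity of measurement incompatibility for CGLMP violation holds for all $d \ge 2$, not just $d = 3$.
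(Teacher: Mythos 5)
Your proof is correct, but it follows a genuinely different route from the paper's. The paper works at the operator level and specifically for $d=3$: it uses the fact that a rank-one projective measurement is compatible only with itself (up to relabeling of outcomes), then enumerates the possible identifications $A^1_j = A^2_{\sigma(j)}$ (and likewise on Bob's side) and shows by direct computation that in each case the CGLMP operator collapses to a form such as $3\mathbb{H}-\mathbb{I}$ or $2\mathbb{I}-3\mathbb{H}_1-3\mathbb{H}_2$ with the $\mathbb{H}$'s being projectors, so its largest eigenvalue cannot exceed $2$. You instead prove the contrapositive by converting compatibility of one party's projective pair into a common eigenbasis, hence deterministic response functions for that party, and then exhibiting an explicit local-hidden-variable decomposition $P(j,k|a,b)=\sum_i \pi_i\,D(j|a,i)\,P(k|b,i)$, after which the bound $\chi_d\le 2$ follows from linearity of the CGLMP expression and convexity over deterministic strategies. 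Your argument is more general — it is dimension-independent, handles degenerate projectors without modification, and makes transparent why compatibility on a single side already suffices to preclude violation — whereas the paper's case-by-case computation gives concrete operator-level information (e.g., which compatible configurations actually attain the value $2$) that your abstract LHV reduction does not provide. One small point worth stating explicitly if you write this up: terms with $\pi_i=0$ should simply be dropped so that the conditional states $\sigma_i$ are well defined; this is trivial but tidies the decomposition.
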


Finally, combining Propositions \ref{prop:sufficient} and  \ref{prop:necessary}, we arrive at our first result.
\begin{theorem}
    Incompatible projective measurements are necessary and sufficient to obtain violation in the CGLMP inequality for two qutrits. 
\end{theorem}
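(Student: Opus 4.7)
The plan is to establish the theorem by separately proving the two directions encapsulated in Propositions \ref{prop:sufficient} and \ref{prop:necessary}, then combining them. Since both halves are already stated, my proposal is really a proof sketch for each proposition that will be invoked in the final two-line combining argument.

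For the sufficiency direction, I would parametrize Alice and Bob's pairs of projective measurements through $U(3)$ unitaries as in Eq.~\eqref{eq:unitary_rand}, giving a vector $\bm{\mu}$ of $4d^{2}=36$ real parameters for $d=3$. The goal is then to show that the quantity $\chi_3^{\min}$ defined in Eq.~\eqref{eq:idmin} strictly exceeds $2$ for every pair $(\epsilon_A,\epsilon_B)$ with $\epsilon_A,\epsilon_B>0$. I would set this up as a constrained min--max problem: the inner maximization over $\rho_{AB}$ reduces to finding the largest eigenvalue of the CGLMP operator $\mathbf{C}_3(\bm{\eta})$ in Eq.~\eqref{eq:cglmp_operator}, while the outer minimization runs over $\Theta_{\epsilon_A,\epsilon_B}$. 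Because the constraint surface cut out by fixing $\mathcal{I}(A^1,A^2)=\epsilon_A$ and $\mathcal{I}(B^1,B^2)=\epsilon_B$ is highly non-convex, I would attack the minimization with a global optimization heuristic such as the improved stochastic ranking evolution strategy (ISRES), seeded from many random initializations, and then sweep finely over $(\epsilon_A,\epsilon_B)$ to certify that $\chi_3^{\min}>2+m_p$ throughout, for a controlled numerical precision $m_p$.

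For the necessity direction, the argument is structural rather than numerical. I would show the contrapositive: if either $\mathcal{I}(A^1,A^2)=0$ or $\mathcal{I}(B^1,B^2)=0$, then $\chi_3\le 2$. By the joint measurability characterization proved in Appendix \ref{sec:incom}, $\mathcal{I}_p=0$ for a pair of projective measurements is equivalent to their elements commuting pairwise, and hence to the existence of a common eigenbasis. On the party whose measurements commute, each outcome can be assigned a deterministic value from the shared eigenbasis for both settings simultaneously, producing a local hidden variable model for the full bipartite statistics; the CGLMP bound of $2$ then holds automatically. Alternatively, one can substitute the commuting spectral form directly into the operator $\mathbf{C}_3$ of Eq.~\eqref{eq:cglmp_operator} and regroup the terms to expose the cancellation that drives the bound, which is the route indicated in Appendix \ref{app:necessary}.

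The main obstacle is the sufficiency half: unlike the CHSH case, there is no known closed-form maximally violating state for CGLMP that one can plug in to obtain a clean analytic lower bound on $\chi_3^{\min}$ in terms of $\epsilon_A,\epsilon_B$, so one is forced to rely on global numerical optimization over $36$ parameters with a highly non-convex landscape. To mitigate the risk that ISRES returns a local rather than global minimum, I would cross-validate by (i) running the optimizer from a large ensemble of random initial points, (ii) verifying continuity of $\chi_3^{\min}$ as $(\epsilon_A,\epsilon_B)$ is varied, and (iii) checking that the recovered minimizers saturate known symmetry classes of CGLMP measurements. Once a uniform margin $\chi_3^{\min}>2+m_p$ is established across the sweep, combining with the necessity argument yields the theorem.
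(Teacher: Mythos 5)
Your proposal follows essentially the same route as the paper: sufficiency is established numerically by minimizing the largest eigenvalue of $\mathbf{C}_3(\bm{\eta})$ over the incompatibility-constrained measurement set via ISRES and certifying $\chi_3^{\min}>2+m_p$, while necessity is argued by noting that a compatible projective pair commutes (equivalently, admits a joint measurement/LHV description), which the paper makes explicit by substituting the common projectors into $\mathbf{C}_3$ in Appendix~\ref{app:necessary} to bound its spectrum by $2$. The theorem then follows by combining the two propositions, exactly as in the paper.
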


\subsection{More nonlocality with less incompatibility}

We begin by defining $\chi_d^{\max}$ which denotes the maximal violation of the CGLMP inequality when the measurements performed by any one of the parties are incompatible by a fixed amount, \({\tt I}\), mathematically,
\begin{eqnarray}
    \chi_d^{\max}({\tt I} ) := \max_{\bm{\eta} \in \Theta_{\mathcal{I}(B^0,B^1)={\tt I}}} \max_{\rho_{AB}} \text{Tr}[\mathbf{C}_d(\bm{\eta}) \rho_{AB}], 
    \label{eq:idmax}
\end{eqnarray}
where $\Theta_{\mathcal{I}(B^1,B^2)={\tt I}} \subset \Theta$ such that all the measurement strategies $\eta \in \Theta_{\mathcal{I}(B^1,B^2)={\tt I}}$ has a fixed incompatibility value for measurements at Bob's side, $\mathcal{I}(B^1,B^2) = {\tt I}$. Note that one could have equivalently defined $\chi_d^{\max}({\tt I})$ by fixing the incompatibility of measurements performed by Alice. This equivalence is guaranteed by the symmetry of exchanging the parties $A$ and $B$ without changing the physical setting.  {\color{black} Before discussing the $d=3$ case, let us briefly describe the qubit scenario. For the CHSH inequality, the maximal violation with two outcome projective measurements on both Alice and Bob's side can be expressed as
\begin{eqnarray}
    \chi_2 = 2\sqrt{1 + \frac{\mathcal{I}_p(A^1,A^2)\mathcal{I}_p(B^1,B^2)}{4.2^\frac2p}}.
    \label{eq:chigenI}
\end{eqnarray}
See Appendix \ref{app:chi2(I)} for details of the calculation. 
Note that, like in the rest of the manuscript, we work with $p = \infty$.
When the incompatibility of Bob's measurements is fixed, $\mathcal{I}(B^1,B^2) = {\tt I}$ then from Eq. \eqref{eq:chigenI} it follows that the maximal violation is obtained for maximally incompatible measurements by Alice, $\mathcal{I}(A^1,A^2) = 2$. The maximal CHSH violation with $\mathcal{I}(B^1,B^2) = {\tt I}$ is then given by
\begin{eqnarray}
    \chi_2 = 2\sqrt{1 + \frac{1}{2}{\tt I}}.
    \label{eq:chi2}
\end{eqnarray}
By varying the incompatibility $\mathcal{I}(B^1,B^2) = {\tt I}$, $\chi_2$ increases monotonically, as shown in Fig.~\ref{fig:cglmp_incom_one_para}  and the maximal violation is achieved when $B^1$ and $B^2$ are the set of eigenvectors of MUBs. 
}

For analyzing, $\chi_3^{\max}({\tt I})$, we again employ the same numerical procedure used for computing $\chi_3^{\min}$. Specifically, we compute $\chi_3^{\max}$ by fixing $\mathcal{I}(B^1,B^2) = {\tt I} \in (0,3\sqrt{2}\approx4.24)$. Note that \(3\sqrt{2}\) is the algebraic maximum for \({\tt I}\) in case of three dimension. The behavior of $\chi_3^{\max}$ reveals some contrasting behaviors that cannot be found for two qubits as will be highlighted more in the following. 
\begin{enumerate}
    \item The upper bound of CGLMP violation for a given ${\tt I}$, $\chi_3^{\max}({\tt I})$, is non-monotonic with respect to the amount of incompatibility ${\tt I}$, as shown in Fig.~\ref{fig:incom_viol}.

    \item The most incompatible measurement pair does not violate CGLMP inequality maximally. In particular, the maximum value of incompatibility in $d=3$ is attained for MUBs \cite{Mordasewicz2022} and is given by $\mathcal{I}(A^1,A^2)=\mathcal{I}(B^1,B^2)=3\sqrt{2}$ while the corresponding violation of the CGLMP inequality is given by 
    \begin{eqnarray}
        \chi_d^{\max}({\tt I} = 3\sqrt{2})\approx 2.8060.
    \end{eqnarray}
    It is strictly less than the maximal CGLMP violation of 
    \begin{eqnarray}
       ~~~~~ \max_{\tt I} \chi_3^{\max}({\tt I}) = \chi_3^{\max}({\tt I}^*) = \sqrt{11/3}\approx 2.9149,
    \end{eqnarray}
as obtained by maximally violating state, different from maximally entangled state \cite{acin2002}.
    \item The CGLMP inequality is maximally violated for a configuration of local measurements satisfying $\mathcal{I}(A^1,A^2)=\mathcal{I}(B^1,B^2)= {\tt I}^*= 3.924$. This illustrates an interesting feature where we get \textit{more violation of the CGLMP inequality with non-maximally incompatible projective measurements.}
\end{enumerate}

\begin{figure}
\includegraphics [width=\linewidth]{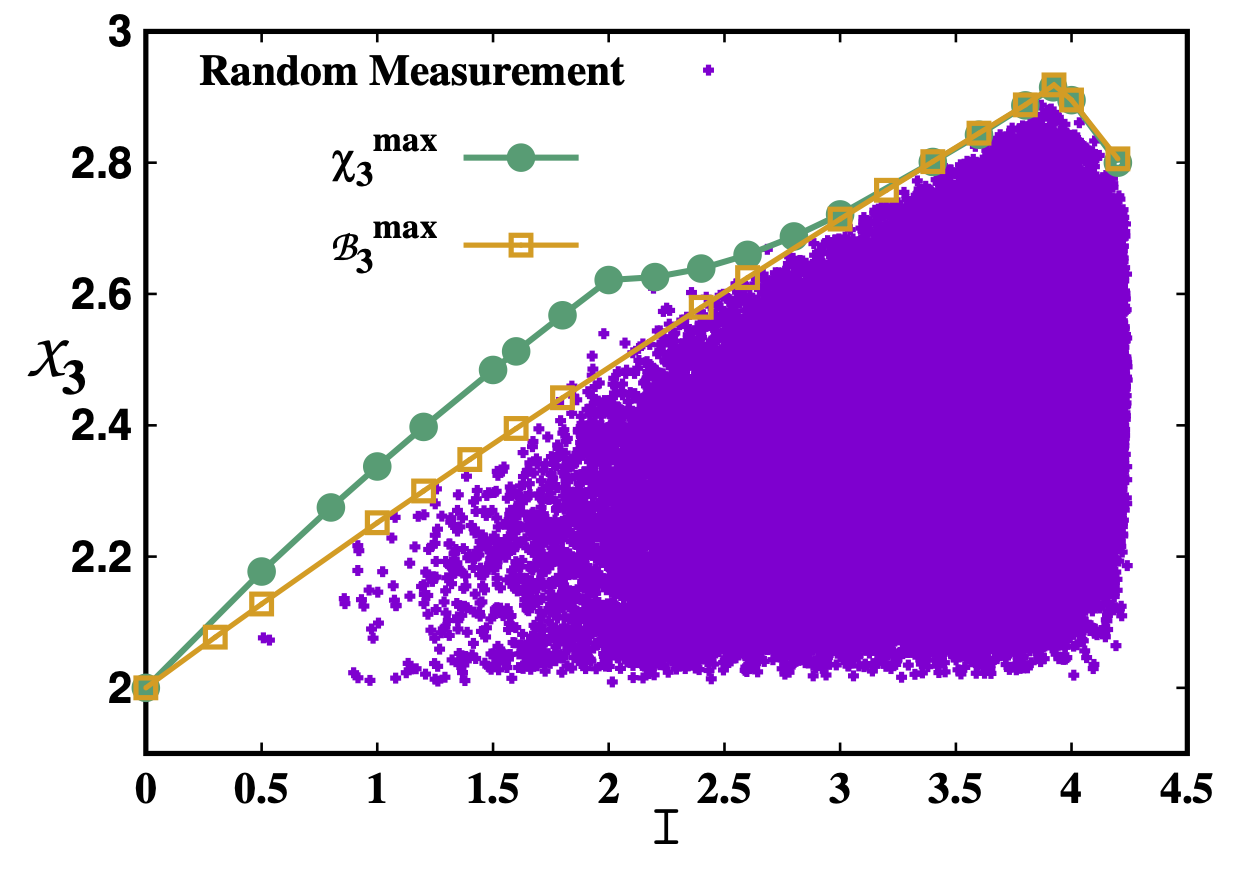}
\caption{(Color online.) \textbf{Amount of CGLMP inequality violation in $d=3$ with incompatibility.}  We Haar uniformly generate $2\times 10^5$ measurement pairs at both Alice's and Bob's side. Violet points correspond to the maximum value of CGLMP violation, $\chi_3$ (vertical axis), given previously mentioned measurement settings while the horizontal axis corresponds to incompatibility, $\mathcal{I}(B^1,B^2)={\tt I}$, of Bob's measurements. For a fixed ${\tt I}$ at Bob's side, \(\chi_3^{\max}\) (see Eq.~\eqref{eq:idmax}), signifies the upper bound of CGLMP violation, optimized over all possible qutrit measurements at both Alice's and Bob's part. \(\mathcal{B}_{3}^{\max}\) (see Eq.~(\ref{eq:idmaxim})) denotes the same, except the optimization is performed over interferometric qutrit measurement settings only. However, interestingly \(\chi_3^{\max}\) and \(\mathcal{B}_{3}^{\max}\) coincide for ${\tt I}\gtrsim 3.2$. Both the axes are dimensionless.} 
\label{fig:incom_viol}
\end{figure}

The above observation raises an important question. Is the nonmonotonic nature of the violation with respect to incompatibility restricted to the case where we consider $\chi_3^{\max}({\tt I})$?  To address this question, we numerically simulate \(2\times 10^5\) random measurements on both Alice and Bob's side as prescribed in Eq.~\eqref{eq:unitary_rand} by Haar uniformly \cite{Zyczkowski1994,deGuise2018} generated unitaries and compute the corresponding maximal violation of the CGLMP inequality in Eq.~\eqref{eq:op_cglmp} which is obtained with the help of Eq.~\eqref{eq:cglmp_operator}. Fig. \ref{fig:incom_viol} depicts the scattered plot of the maximal value of the CGLMP operator, \(\chi_3\) in Eq.~\eqref{eq:op_cglmp} with respect to a fixed \({\tt I}\).
Our main observations from Fig. \ref{fig:incom_viol} are listed below:
\begin{enumerate}
    \item Since by construction, all the random measurement setting pairs are incompatible, the CGLMP inequalities are violated for all those settings, thereby confirming Proposition \ref{prop:sufficient}. 
    
    
    \item From the scattered plot in the \((\chi_3,{\tt I})\)-plane (in  Fig. \ref{fig:incom_viol}), it is clear that \(\chi_3\) is  nonmonotonic with all values of  \({\tt I}\).
\end{enumerate}

Although the results presented are generated by fixing incompatibility in measurement pairs at Bob's side, the qualitative relation between \(\chi_3\) and \(\tt I\) remains the same even when the other side's incompatibility is considered.


\subsection{Nonuniqueness of the maximally violating states}

Let us identify the states that violate CGLMP inequality maximally for a given amount of incompatibility in Alice or Bob's side.  In particular, we are interested to identify state $\rho_{AB}^*$ that maximizes $\chi_d^{\max}({\tt I})$ in Eq. \eqref{eq:idmax}, i.e,
\begin{eqnarray}
     \chi_d^{\max}({\tt I} ) = \max_{\bm{\eta} \in \Theta_{\mathcal{I}(B_0,B_1)={\tt I}}} \text{Tr}[\mathbf{C}_d(\bm{\eta}) \rho_{AB}^*].
\end{eqnarray}
The state  $\rho_{AB}^*$ corresponds to the achievable upper bounds of the violations of the CGLMP inequality for a fixed incompatibility ${\tt I}$ in the local lab of Bob in Fig.~\ref{fig:incom_viol}. A major motivation for such a query is from a feature of the CHSH inequality, where irrespective of the local incompatibility value, the maximally violating state is the maximally entangled two-qubit state.

A detailed investigation for $d = 3$ is enumerated below. 
\begin{enumerate}
\item {\it Nonuniqueness of maximally violating states (MVS).}  Firstly, the maximally violating state is pure, $\rho_{AB}^* = \ketbra{\Psi_{mv}^\gamma}{\Psi_{mv}^\gamma}$. It demonstrates interesting characteristics, which are absent in two-dimensional systems.  In particular,  the states that achieve the upper bound of CGLMP violation in two qutrits for a fixed incompatibility \textcolor{black}{$\mathcal{I}(B^1, B^2)=\tt{I}$ at Bob's side} are local unitarily equivalent to $\ket{\Psi_{mv}^\gamma}$, where    
\begin{eqnarray}
        \ket{\Psi_{mv}^\gamma}=\frac{1}{\sqrt{2+\gamma^2}}(\ket{00}+\gamma\ket{11}+\ket{22}).
    \end{eqnarray}
Here after fitting, we find
    \begin{eqnarray}
        \gamma=\begin{cases}
            0 & \text{for }{\tt{I}}\leq 2,\\\nonumber
            1.541\ln{\tt{I}}-1.084 & \text{for }2\le{\tt{I}} \leq 3.4,\\\nonumber
            -0.050{\tt I} + 0.980 & \text{for }3.4\le{\tt{I}} \leq 3\sqrt{2},
        \end{cases}
    \end{eqnarray} 
    with the coefficient of determination given by $R^2=0.99$. Since the choice of $\gamma$ depends on the local incompatibility ${\tt I}$, \textit{the maximally violating states for two qutrits are non-unique which is, in sharp contrast, to two-qubit scenario}.

 \item {\it MVS in qubits vs qutrits.}   Interestingly, up to ${\tt{I}}=2$, the maximally violating state is local unitarily equivalent to $\ket{\Psi^0_{mv}}=\frac{1}{\sqrt{2}}(\ket{00}+\ket{22})$, the maximally entangled state in the two dimensional subspace. Note that, ${\tt{I}}=2$ is the maximal value of incompatibility which can be reached by two-outcome projective measurements. 
  It possibly indicates that the feature of the CHSH inequality is emerging via the uniqueness of the maximally entangled state in the $\{\ket{00},\ket{22}\}$ subspace. The non-uniqueness of the maximally violating state emerges for ${\tt I}>2$ which can only be achieved by qutrit measurements.

\begin{figure}
\includegraphics [width=\linewidth]{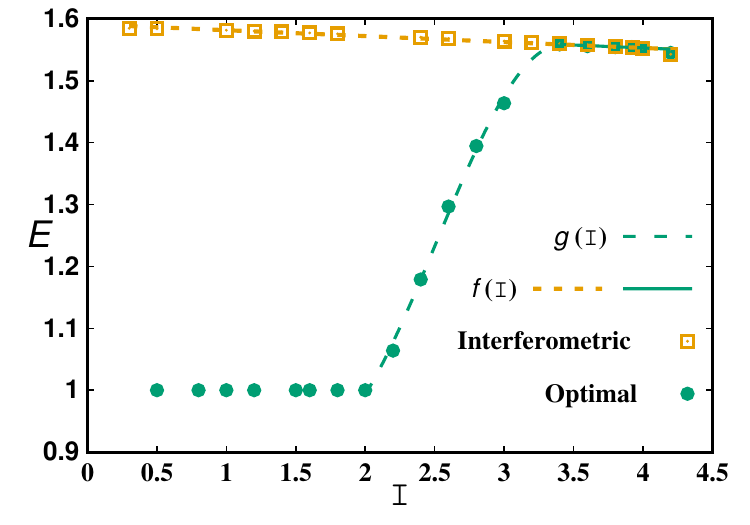}
\caption{(Color online.) \textbf{Entanglement of MVS for a given value of incompatibility.} The entanglement \(E\) (ordinate) of the states, achieving \(\chi_3^{\max}\) (yellow squares) and \(\mathcal{B}_3^{\max}\) (green circles) (see Fig.~\ref{fig:incom_viol}) against a  incompatibility, \({\tt I}\) (abscissa), at Bob's side. Here, \(g({\tt I})=-0.230 {\tt I^3} + 1.735{\tt I^2} - 3.818{\tt I} + 3.532\) (green dashed line) is the fitted curve in the interval of incompatibility, \(2\leq{\tt I}\leq 3.4\) with \(R^2_{g}=0.998\). Similarly, \(f({\tt I})=-0.010 {\tt I} + 1.591\) (yellow dashed line) is the fitted curve in the range, \(0.3\leq {\tt I}\leq 4.2\) with \(R^2_{f}=0.957\). As expected, the entanglement, \(E\) of states, achieving \(\chi_3^{\max}\) matches to those of \(\mathcal{B}_{3}^{\max}\) in the range of incompatibility, \(3.4\lesssim {\tt I} \lesssim 4.2\) where the corresponding upper bounds of nonlocality value coincides (see Fig.~\ref{fig:incom_viol}).
Both the axes are dimensionless.} 
\label{fig:entropy_incm}
\end{figure}

\end{enumerate}
The above analysis emphasizes the dual role of incompatibility and entanglement in the case of maximal violation of CGLMP inequality in qutrits, $d=3$, 
 since for two qubits, a maximally violating state for a given incompatibility is always a maximally entangled state. To visualize the relation of the incompatibility present in measurements leading to MVS and its entanglement content, we observe that \({\tt I}<2\), \(E(\ket{\Psi_{mv}^0})=S(\rho_A)\)=1 where \textcolor{black}{ \(S(\sigma)=-\Tr(\sigma\log_2 \sigma)\) representing von Neumann entropy and  \(\rho_A\) denotes the local density matrix of \(|\Psi_{mv}^\gamma\rangle\)} while for \({\tt I}>2\), \(E(\ket{\Psi_{mv}^\gamma})\) is monotonically increasing with \({\tt I}\) upto ${\tt I} \approx 3.4$ 
(see Fig. \ref{fig:entropy_incm}).  After  ${\tt I} \gtrsim 3.4$, entanglement is approximately saturating.

In the subsequent section, we will demonstrate that this nonmonotonicity between the incompatibility in measurement pairs and the CGLMP violation can also be observed in an experimental setting.


\section{CGLMP violation in an interferometric measurement scheme}
\label{sec:interferometric}
Let us now address the similar three questions answered in the previous section for the general projective measurements
-- $1.$ nonmonotonicity of \( \chi_3\) with \({\tt I}\), $2.$ \(\chi_3^{\max}({\tt I})\) and $3.$ comparison between the quantum states that achieve \(\chi_3^{\max}({\tt I})\) in the case of an experimentally realizable interferometric scheme. In photonic interferometric setup, the maximal violation of the CGLMP inequality (Eq.~\eqref{eq:cglmp}) can be achieved by performing measurements that consist of a tritter \cite{Zukowski1997realizable} (a generalization of $50-50$ beam splitter) and phase shifters \cite{acin2002} which will be summarized below.
Initially, subsystems, A and B, apply local unitary transformation, \( U(\vec{\phi}_a) \) and \( U(\vec{\phi}_b) \), respectively on their individual subsystems. Subsequently, \( A \) performs a discrete Fourier transform, \( U_{FT}=\frac{1}{\sqrt{d}}e^{\big({i(j-1)(k-1)}\frac{2\pi}{d}\big)} \), while \( B \) applies the complex conjugate operation, \( U_{FT}^* \). For the subsystem \( A \), the vector \( \vec{\phi}_a \) is a \( d \)-dimensional vector with components defined as \( \vec{\phi}_a = [\phi_a(0), \phi_a(1), \phi_a(2), \ldots, \phi_a(d-1)] \). Similarly, for the subsystem \( B \), the corresponding vector is denoted by \( \vec{\chi}_b \), where \( \phi_a \) is replaced by \( \chi_b \). The unitary operation on \( A \) is expressed as 
\begin{eqnarray}
    U(\vec{\phi}_a) = \text{diag}\left(e^{i\phi_a(0)}, e^{i\phi_a(1)}, \ldots, e^{i\phi_a(d-1)}\right),
\end{eqnarray}
and an analogous expression holds for \( B \). Both parties then conduct projective measurements in the computational basis, with \( A \) using the projector \( \pi_j^a = \ketbra{j}{j} \) and \( B \) using \( \pi_k^b = \ketbra{k}{k} \). Therefore, the joint probability distribution can be written as
\begin{eqnarray}
  \nonumber  && P(A_j^a,B_k^b) \\
    &=&\Tr[\pi_j^a \otimes \pi_k^b V(\vec{\phi_a})\otimes V(\vec{\chi_b})|\rho_{AB} V(\vec{\phi_a})^{\dag}\otimes V(\vec{\chi_b})^{\dag}]\nonumber\\
    &=& \Tr[V(\vec{\phi_a})^{\dag}\otimes V(\vec{\chi_b})^{\dag}\pi_j^a \otimes \pi_k^b V(\vec{\phi_a})\otimes V(\vec{\chi_b})\rho_{AB} ].\nonumber\\ 
    \label{eq:operator_cglmp_one}
\end{eqnarray}
We can see that, equivalently, the measurement outcomes can now be written as
\begin{eqnarray}
    \nonumber A_j^a &=& V(\vec{\phi}_a)^\dagger\pi_j^aV(\vec{\phi}_a),\\
    B_k^b &=& V(\vec{\chi}_b)^\dagger\pi_k^bV(\vec{\chi}_b),
    \label{eq:opt_meas}
\end{eqnarray}
with \(V({\vec\phi_{a}})=U_{FT}U(\vec\phi_{a})\) and \(V({\vec\chi_{b}})=U_{FT}^{*}U(\vec\chi_{b})\). The maximal violation of the CGLMP inequality can be extracted with measurement settings \cite{acin2002}
\begin{eqnarray}
\phi_{1}(j)=0, \phi_{2}(j)=\frac{j\pi}{d}, \zeta_{1}(j)=\frac{j\pi}{2d}, \zeta_{2}(j)=-\frac{j\pi}{2d}. 
    \label{eq:phases_unitary}
\end{eqnarray}
From Eq.~\eqref{eq:op_cglmp}, it is evident that the maximal value of $\chi_d$ is achieved by taking $\rho_{AB}$ as the eigenstate of $\mathbf{C}_d$ with maximal eigenvalue. As mentioned before, unlike two qubits, for $d\ge 3$, non-maximally entangled states give maximal violation of the CGLMP inequality with the measurement settings described above. Specifically, for $d=3$, the state that violates maximally the CGLMP inequality takes the form as $\ket{\Psi_{mv}^\gamma}=\frac{1}{\sqrt{2+\gamma^2}}(\ket{00}+\gamma\ket{11}+\ket{22})$ with $\gamma=(\sqrt{11}-\sqrt{3})/2$.

The three main questions that we want to address in the interferometric measurement setting are as follows:
\begin{enumerate}
\item Do the phenomena of non-monotonicity of the CGLMP violation with respect to measurement incompatibility persist in this case also?

    \item How well can  $\chi_3^{\max}({\tt I})$ be achieved by optimal measurements in the interferometric setting?

    \item What are the maximal violating states versus measurement incompatibility ${\tt I}$? How does it compare with the ones obtained for general qutrit measurements?
\end{enumerate}

\subsubsection*{Non-monotonicity of the CGLMP violation}


\begin{figure}
\includegraphics [width=\linewidth]{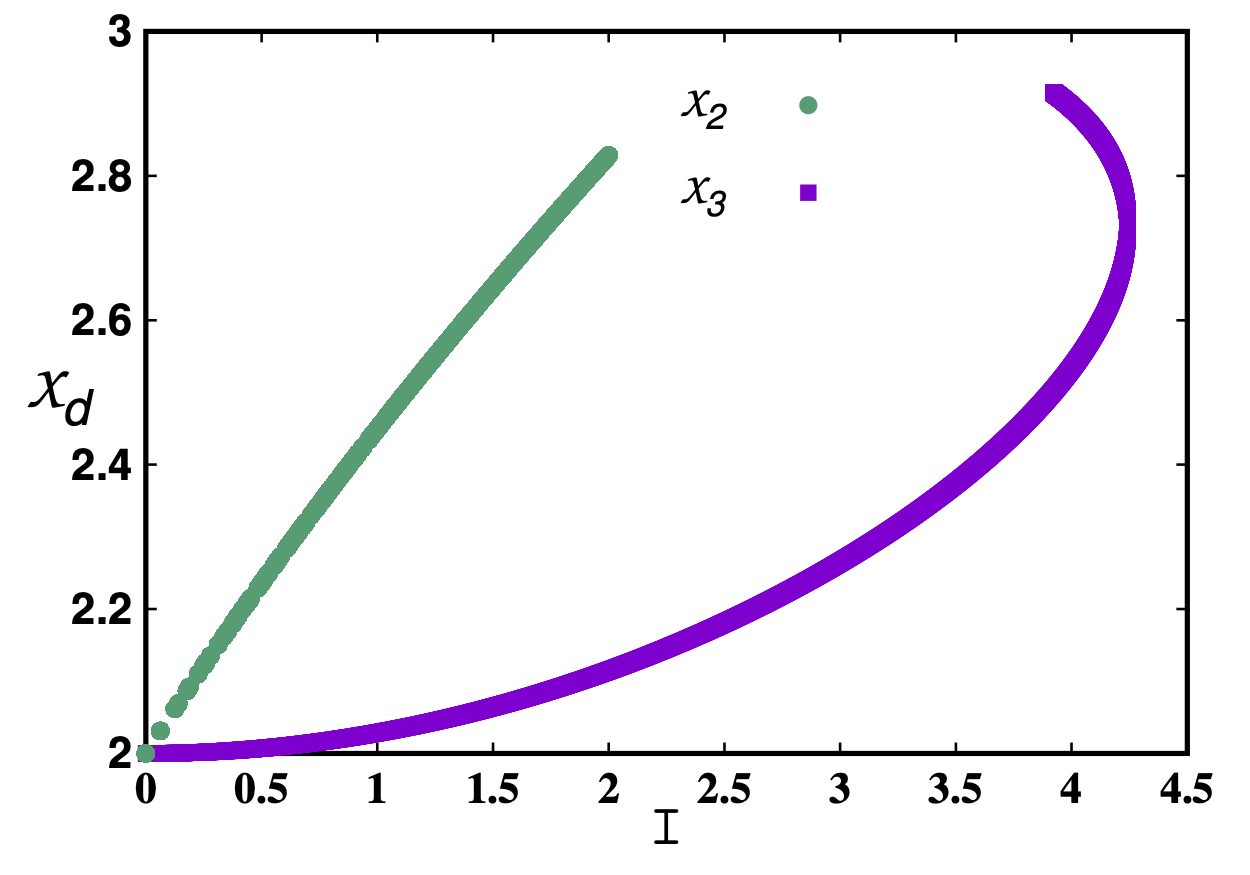}
\caption{(Color online.) \textbf{\(\chi_d\) (vertical axis) vs the incompatibility, \({\tt I}\) (horizontal axis) at Bob's side}. By fixing the interferometric measurements in both sides according to Eqs.~\eqref{eq:opt_meas} and \eqref{eq:phases_unitary} except varying $\zeta_2(1)$ (where $0\leq\zeta_2(1)\leq 2\pi$) of Bob's setting, we calculate \(\chi_3\) (violet squares) for two qutrits whereas, in case of qubits,  CGLMP  which is equivalent to the CHSH inequality, $\chi_2$ (green circles) is calculated for all possible qubit measurements on both Alice's and Bob's side. Notably, \(\chi_2\) varies monotonically with \({\tt I}\), while nonmonotonic behavior is observed for  \(\chi_3\). Both axes in the plot are dimensionless.
} 
\label{fig:cglmp_incom_one_para}
\end{figure}


Let us now concentrate on the qutrit scenario ($d=3$). The measurements on Bob's side are chosen to $\zeta_1(j)=\frac{j\pi}{6}$, and $\zeta_2(j)=-\frac{j\pi}{6}$ for $j=0,1,2$. While, Alice's measurements are $\phi_1(j)=0$ for $j=0,1,2$ and $\phi_2(0)=0$, $\phi_2(1)=\xi\in[0,2\pi]$, $\phi_2(2)=\frac{2\pi}{3}$, i.e, we fixed two settings at Alice's part and change a parameter in one of her settings, \(\xi\), thereby varying incompatibility at Alice's side. Note that these pair of settings do not correspond to MUB bases in \(d=3\) which achieve maximum according to the incompatibility quantifier, \({\tt I}\). 


In contrast to the CHSH scenario,  the maximal violation obtained with measurements in Eqs.~\eqref{eq:opt_meas} and ~\eqref{eq:phases_unitary} again displays nonmonotonic behavior with the variation of incompatibility, {\color{black}confirming the contrasting behavior of qutrits compared to qubits in an experimental situation (see Fig.~\ref{fig:cglmp_incom_one_para}).} 


\subsubsection*{Maximal violation of the CGLMP inequality}

Towards finding the maximal violation of the CGLMP inequality in this interferometric set-up, we modify Eq.~\eqref{eq:idmax} for a fixed local incompatibility value of measurements on any of the two spatially separated parties as 
\begin{eqnarray}
    \mathcal{B}_3^{\max}({\tt I} ) := \max_{\bm{\eta} \in \Theta^{\text{IM}}_{\mathcal{I}(B^1,B^2)={\tt I}}} \max_{\rho_{AB}} \text{Tr}[\mathbf{C}_d(\bm{\eta}) \rho_{AB}], 
    \label{eq:idmaxim}
\end{eqnarray}
where $\Theta^{\text{IM}}$ denotes the set of all interferometric measurements while $\Theta^{\text{IM}}_{\mathcal{I}(B^1,B^2)={\tt I}}$ denotes a subset of $\Theta^{\text{IM}}$ for which the incompatibility of the measurements performed in Bob's lab is fixed to ${\tt I}$. We know \cite{Collins2002bell} that the overall maximal violation of the CGLMP inequality can indeed be achieved with the interferometric measurements, i.e., $$\max_{\tt I} \chi_3^{\max} ({\tt I}) = \max_{\tt I} \mathcal{B}_3^{\max} ({\tt I}) = \sqrt{11/3}\approx 2.9149.$$ 

The key question that we address here is whether this equivalence holds, in general. Our analysis reveals that there exists a critical value of incompatibility, ${\tt I} \approx 3.2$, above which it holds, and $\chi_3^{\max}$ and $\mathcal{B}_3^{\max}$ coincides.
(see Fig.~\ref{fig:incom_viol}). Precisely, we find that when ${\tt I} \leq 3.2$, $\mathcal{B}_3^{\max} ({\tt I})$ is strictly smaller than  $\chi_3^{\max} ({\tt I})$. Therefore, we have
\begin{eqnarray}
    \mathcal{B}_3^{\max} ({\tt I}) = \begin{cases}
       < \chi_3^{\max} ({\tt I}) & \text{ for } {\tt I} \lesssim3.2, \nonumber \\
      = \chi_3^{\max} ({\tt I}) & \text{ for } {\tt I} > 3.2.
    \end{cases}
\end{eqnarray}
Although such deviation of $\mathcal{B}_3^{\max}({\tt I})$ from $ \chi_3^{\max} ({\tt I})$ for small values of incompatibility can be explained as we optimize over a restricted class of measurements, the coinciding feature is surprising.


\subsubsection*{Is the maximally violating state non-unique?}

An interesting finding in the previous section was that the maximally violating state explicitly depends on the incompatibility. This was true under the most general qutrit projective measurements.
Here, we ask whether such a correlation of the maximally violating state exists when we restrict the measurements to ones in the interferometric scheme.

In this scenario, the entanglement of the maximally violating state with the measurement strategy mentioned above shows a qualitatively different response to the local incompatibility compared to the general measurement scenario, as depicted in Fig. \ref{fig:entropy_incm}. In particular, the entanglement entropy of the MVS displays a slow monotonic decrease with $\tt I$, leading to a curious behavior exclusively in the case of interferometric measurement scheme which we can call \textit{more nonlocality with less entanglement.} In this case, too, the maximally violating state reads as (up to local unitaries) $$\ket{\Psi_{mv}^\gamma} = \frac{1}{2+\gamma^2}(\ket{00}+\gamma \ket{11}+\ket{22}),$$
Here $\gamma$ can be written as a function of the local incompatibility $\tt I$ as 
    \begin{eqnarray}
        \gamma= -0.050{\tt I} + 0.980,
    \end{eqnarray}
where $R^2=0.99$.

\subsection{Insufficiency of incompatibility for CGLMP violation}

Although incompatible projective measurements always lead to the violation of the CHSH inequality, it has been shown that incompatible positive operator valued measurements (POVMs) are not sufficient to violate the CHSH inequality \cite{priya_ghosh2023}. Specifically, there exist pairs of POVMs that are not jointly measurable but do not induce a violation of the CHSH inequality. 

A similar question can be raised beyond two qubits by considering CGLMP inequality which remains open till date. We conclusively address this query in Proposition~\ref{prop:insufficiency} in the context of CGLMP inequality for two qutrits. 

\begin{proposition}
    Incompatibility of POVM is not sufficient for violating the local hidden variable model via two input and three output -- based CGLMP inequality.
    \label{prop:insufficiency}
\end{proposition}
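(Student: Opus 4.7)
The plan is to exhibit an explicit family of qutrit POVMs parametrized by a noise level $\eta$ and identify a window of $\eta$ in which the POVMs are incompatible yet cannot violate CGLMP against any shared state and any measurement choice by the other party. Starting from an incompatible pair of qutrit projective measurements $\{\Pi_i^a\}_{i=0}^{2}$, $a=1,2$, I would define the noisy POVMs
\begin{equation}
M_i^{a,\eta} \;:=\; \eta\,\Pi_i^a + \frac{1-\eta}{3}\,\mathbf{1}, \qquad \eta \in (0,1],
\end{equation}
placed on Bob's side, and let Alice choose any projective measurements together with an arbitrary bipartite state $\rho_{AB}$.

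The first step is a noise-scaling identity for $\chi_3$. Splitting each joint probability as $\eta\,\Tr[A^a_j\otimes\Pi_k^b\rho_{AB}] + \tfrac{1-\eta}{3}\Tr[A^a_j\rho_A]$ and substituting into Eq.~\eqref{eq:cglmp}, the $(1-\eta)$-piece contributes, for every fixed $a,b$, a constant independent of the shift index $k$; the alternating $+k$/$-k-1$ structure of the CGLMP combination makes this constant cancel. Consequently, for every Alice's measurement strategy and every state,
\begin{equation}
\chi_3^{\text{noisy}} \;=\; \eta\,\chi_3^{\text{ideal}} \;\leq\; \eta\cdot 2\sqrt{11/3},
\end{equation}
and no CGLMP violation is possible whenever $\eta \leq \eta_{\text{CGLMP}}:= \sqrt{3/11}\approx 0.522$.

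The second step is to determine the joint-measurability threshold $\eta_{\text{JM}}$ for the noisy pair. Joint measurability is an SDP feasibility problem: $\{M_i^{1,\eta}\}$ and $\{M_j^{2,\eta}\}$ admit a joint POVM $\{G_{ij}\}$ on $\mathbb{C}^{3}$ with the prescribed marginals iff the corresponding semidefinite cone is nonempty. Either by invoking a closed-form noise threshold for the chosen projectors or by dualizing the SDP to extract an explicit incompatibility witness, one obtains $\eta_{\text{JM}}$. The design freedom is the choice of $\Pi_i^a$: to open the gap $\eta_{\text{JM}}<\eta_{\text{CGLMP}}$ one needs projectors whose CGLMP capacity is strictly below the quantum maximum (so $\eta_{\text{CGLMP}}$ is pushed up) while their joint-measurability robustness remains small. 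Any $\eta^{*}\in(\eta_{\text{JM}},\eta_{\text{CGLMP}}]$ then yields a pair of POVMs on Bob's side that is provably incompatible yet cannot violate CGLMP for any Alice's strategy and any shared state, proving Proposition~\ref{prop:insufficiency}.

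The main obstacle is opening the gap $\eta_{\text{JM}}<\eta_{\text{CGLMP}}$: for the most natural candidate, a qutrit MUB pair with isotropic one-sided noise, the two thresholds are expected to coincide, mirroring the well-known CHSH/qubit situation. Resolving this will likely require either a numerically optimized choice of non-MUB projectors (with correspondingly reduced CGLMP capacity and hence larger $\eta_{\text{CGLMP}}$, combined with smaller joint-measurability robustness) or POVMs with more than three outcomes built as convex combinations of projectors. I would carry out such an optimization, verify the gap numerically, and then certify both the incompatibility (via the SDP dual witness) and the bound $\chi_3\leq 2$ (via the scaling identity above) in a single explicit example.
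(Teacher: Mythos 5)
Your proposal is a plan rather than a proof: the decisive step --- exhibiting an explicit pair of incompatible qutrit POVMs whose joint-measurability threshold lies strictly below the no-violation threshold --- is precisely what you defer to an unspecified numerical optimization, and you yourself concede that for the natural candidate (a qutrit MUB pair with one-sided isotropic noise) the two thresholds may coincide. As written, the argument also contains a quantitative slip that makes the window essentially impossible to open: the quantum maximum of the $d=3$ CGLMP expression is $1+\sqrt{11/3}\approx 2.91$, not $2\sqrt{11/3}\approx 3.83$, so with one-sided noise the correct no-violation region is $\eta \lesssim 2/(1+\sqrt{11/3})\approx 0.69$ (your $\sqrt{3/11}\approx 0.52$ is merely a weaker consequence of the inflated bound), while joint-measurability thresholds for depolarized qutrit projective pairs typically sit around $0.68$--$0.75$; a one-sided construction is therefore at best marginal, and nowhere in your text is either threshold actually certified.

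The paper closes this gap with a device you do not consider: it puts the white noise on \emph{both} Alice's and Bob's optimal interferometric measurements, Eq.~\eqref{eq:unsharp_meas}. The same cancellation you observe for the CGLMP combination then removes both the $(1-\eta)^2$ and the cross terms, so the achievable value scales quadratically, $\chi_3=(1+\sqrt{11/3})\,\eta^2$, and violation requires $\eta>\sqrt{2/(1+\sqrt{11/3})}\approx 0.83$; on the other hand, the robustness bound of Eq.~\eqref{eq:upper_bound_robustness} certifies that each local pair is already incompatible for $\eta\ge 3/4$. This produces the explicit window $3/4<\eta\le 0.83$ and completes the proof, whereas your one-sided scheme would additionally need a certified joint-measurability threshold below $\approx 0.69$, which you neither compute nor witness. (If completed, your formulation would yield a somewhat stronger statement --- incompatibility on Bob's side alone, with Alice's measurements and the state unrestricted --- but as it stands the central existence claim is left open.)
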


\begin{proof}
    Let us assume that Alice's and Bob's measurements are noisy of those considered in Eqs.~\eqref{eq:opt_meas} and ~\eqref{eq:phases_unitary}. Such a consideration may appear naturally in an experimentally realizable set-up due to the non-isolation of the measurement apparatus from the environment. Specifically,  the noisy measurements can be written as 
\begin{eqnarray}
    \nonumber A_j^a &=& \eta V(\vec{\phi}_a)^\dagger\pi_j^aV(\vec{\phi}_a) + (1-\eta)\frac{\mathbb{I}}{3},\\
    B_k^b &=& \eta V(\vec{\zeta}_b)^\dagger\pi_k^bV(\vec{\zeta}_b) + (1-\eta)\frac{\mathbb{I}}{3},
    \label{eq:unsharp_meas}
\end{eqnarray}
where $\eta\in[0,1]$ is the unsharpness parameter in measurements. Under these measurement settings, the maximal value of the CGLMP expression can be obtained as
\begin{eqnarray}
    \chi_3 = (1+\sqrt{11/3})\eta^2,
\end{eqnarray}
where the optimal state that achieves this violation is still $\ket{\Psi_{mv}^\gamma}$ with $\gamma=(\sqrt{11}-\sqrt{3})/2$. It shows that the CGLMP violation is possible for $\eta> \sqrt{\frac{2}{1+\sqrt{11/3}}} \approx 0.83$. However, these measurement settings are not jointly measurable according to the robustness of incompatibility measure defined in Eq.~\eqref{eq:upper_bound_robustness} for $\eta\geq 3/4\approx 0.75$. Therefore, in the region $3/4 < \eta\leq \sqrt{\frac{2}{1+\sqrt{11/3}}}$,  the measurements are incompatible (not jointly measurable), but there does not exist any state which can achieve a violation in the CGLMP inequality. 
\end{proof}


\section{Random Access Codes vs. Incompatibility}
\label{sec:rac_incom}
We now shift our focus to the role of measurement incompatibility in the prepare and measure scenarios, in particular, for the case of random access codes (RACs). Let us begin with a brief discussion of the $2^d\to 1$ random access code (RAC) \cite{Ambainis1999,Ambainis2002,ambainis2009quantumrandomaccesscodes}. In this framework, Alice gets two uniformly random classical dits, $a_1,a_2\in \{0,1,\ldots,d-1\}$ as inputs. Depending on the input, Alice encodes the data into a smaller message of dimension $d$  and communicates it to Bob. Bob receives a uniformly random input $b\in\{1,2\}$, based on which his task is to guess the value of Alice's dit $a_b$. Hence, Bob's output $j$ is the estimated value of $a_b$. The average success probability is defined as
\begin{eqnarray}
    \mathcal{R}_{d}&=&\frac{1}{2d^{2}}\sum_{a_1,a_2=0}^{d-1}\sum_{b=1}^{2} P(j=a_{b}|a_1,a_2b).\nonumber\\
    \label{eq:RAC_dit}
\end{eqnarray}
Classically, Bob can win the game with a maximum average success probability of $\mathcal{R}_d^{C,m}=\frac{1}{2}(1+\frac{1}{d})$. However, in the case of quantum random access code (QRAC), Bob chooses the encoding state to be qudit, $\rho_{a_1a_2}$ of dimension $d$, and the decoding strategy is performed by a measurement on $\rho_{a_1a_2}$. The measurement is described by $B^{b}\equiv\{B^b_{j=a_b}\}_{a_b}$. Therefore, the average success probability in the quantum case can be expressed as 
\begin{eqnarray}
    \mathcal{R}_{d}^Q&=&\frac{1}{2d^{2}}\sum_{a_1,a_2=0}^{d-1}\tr[\rho_{a_1a_2}(B^1_{a_1}+B^2_{a_2})].\nonumber\\
    \label{eq:RAC_qdit}
\end{eqnarray}
The optimal average success probability in the quantum scenario, $\mathcal{R}_d^{Q,m}=\frac{1}{2}\left(1+\frac{1}{\sqrt{d}}\right)>\mathcal{R}_d^{C,m}$,  can be achieved by choosing the $B^1$ and $B^2$ as rank-$1$ projective measurements in mutually unbiased bases and $\rho_{a_1a_2}$ as the eigenstates of $B^1_{a_1}+B^2_{a_2}$, corresponding to the maximum eigenvalue \cite{Farkas_2019}. Therefore, if the encoding state is chosen optimally, we can write \cite{Mordasewicz2022}
\begin{eqnarray}
    \mathcal{R}_d^Q=\frac{1}{2d^2}\sum_{a_1,a_2=0}^{d-1}||B^1_{a_1}+B^2_{a_2}||_{\infty}.
    \label{eq:racsuccessprob}
\end{eqnarray}

\subsection{Inequivalence of higher dimensional RAC and nonlocality}
Let us now investigate the role of measurement incompatibility to obtain the average success probability in QRAC. Moreover, we exhibit how higher dimensional nonlocality captured by the violations of the CGLMP inequality possesses a qualitatively different response to measurement incompatibility in contrast to the average success probability obtained in QRAC. Before presenting the result for arbitrary dimension, $d$, we demonstrate that in the case of $d=2$, the average success probability of the QRAC game can be explicitly cast in terms of measurement incompatibility. 
{\color{black}
First, note that both $\mathcal{R}_2^Q$ and $\mathcal{I}_p$ can be written as functions of a single parameter $x$, 
\begin{eqnarray}
  \mathcal{R}^Q_2 &=& \frac14 (2+ x + \sqrt{1-x^2})  \\
   \mathcal{I}_p\equiv\mathcal{I}_p(B^1,B^2) &=& 2^{2 + \frac1p} x\sqrt{1-x^2},
\end{eqnarray}
where for two arbitrary projective measurements $B^1 = \{B^1_1,B^1_2\}$ and $B^2 = \{B^2_1,B^2_2\}$, $x = |\langle B^1_1|B^2_1\rangle|$. Combining the above two expressions, we get
\begin{eqnarray}
    \mathcal{R}^Q_2 = \frac12 + \frac14 \sqrt{1+ \frac{\mathcal{I}_p}{2^{1+\frac1p}}}.
    \label{eq:r2q}
\end{eqnarray}
The details of the derivations are provided in the Appendix. \ref{app:rac221}. Another interesting result that directly follows from Eqs. \eqref{eq:chi2} and  \eqref{eq:r2q} is
\begin{eqnarray}
    \mathcal{R}_2^Q = \frac{1 + \chi_2}{2},
\end{eqnarray}
which explicitly demonstrates the isomorphism between the average QRAC success probability and the CHSH violation.
}

\begin{figure}
\includegraphics [width=\linewidth]{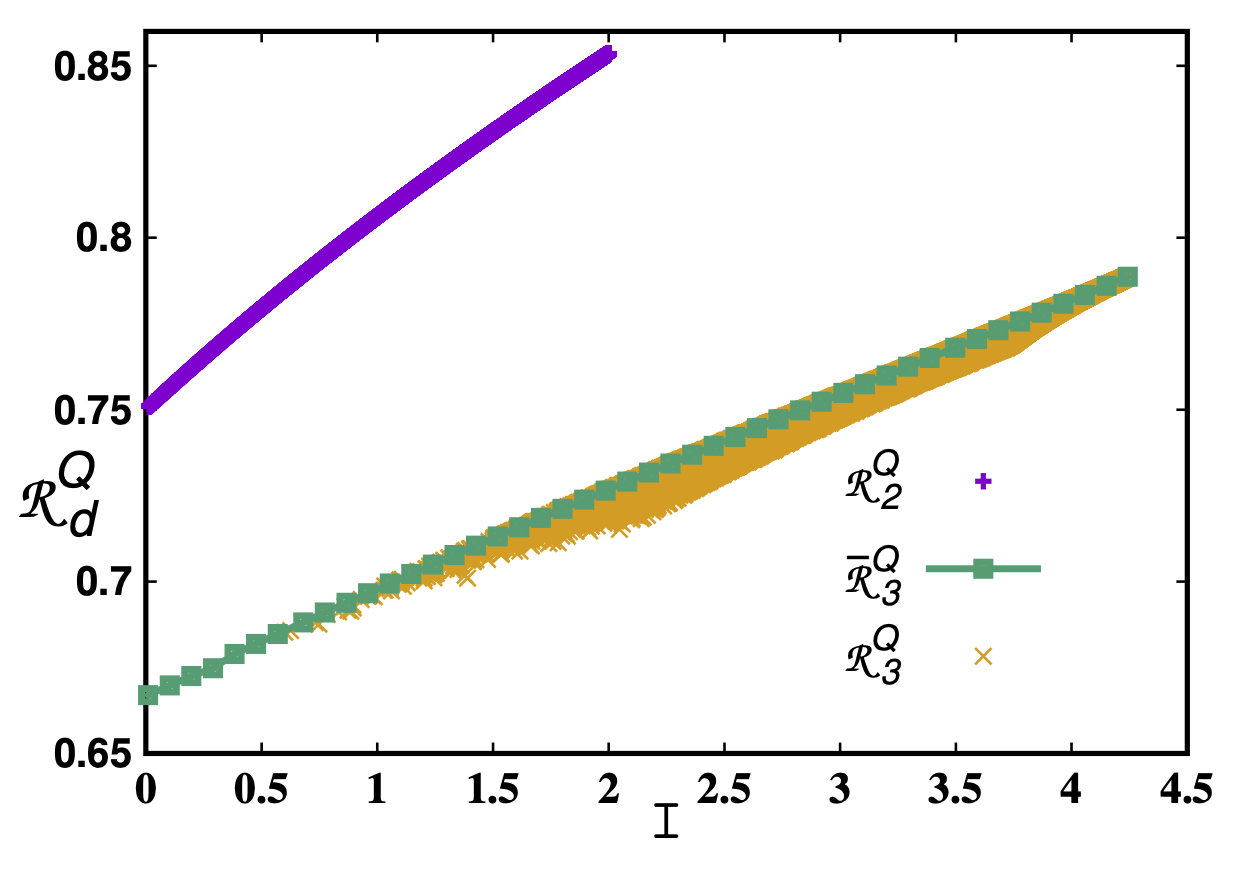}
\caption{(Color online.) \textbf{Maximum average success probability of $2^d\to 1$ QRAC game.} \(\mathcal R_d^{Q}\) (ordinate) against incompatibility, \({\tt I}\) (abscissa) of measurements performed by Bob.  Note that \(\mathcal{R}_2^{Q}\)(violet points) varies monotonically with Bob's incompatibility \({\tt I}\), whereas \(\mathcal{R}_3^{Q}\)(yellow crosses) displays non-monotonic behavior. Interestingly, upper bound of average success probability, \(\mathcal{\bar R}_3^{Q}\)(green squares) remains monotonic with respect to ${\tt I}$. Here, both horizontal and vertical axes are dimensionless.} 
\label{fig:rac_d2_d3}
\end{figure}
For a general dimension $d$, the average success probability in the QRAC game and the measurement incompatibility can be formally expressed as
\begin{eqnarray}
    \mathcal{R}_d^Q &=& \frac12 + \frac{1}{2d^2}\sum_{i,j=0}^{d-1} x_{ij}, \nonumber \\
    \mathcal{I}_p &=&  2^{\frac1p}\sum_{i,j=0}^{d-1} x_{ij}\sqrt{1 - x_{ij}^2},
\end{eqnarray}
where $x_{ij} = |\langle B^1_i|B^2_j\rangle|$. However, unfortunately, unlike in the case of $d = 2$, expressing $\mathcal{R}_d^Q$ as an explicit function of $\mathcal{I}_p$ is not straightforward. Even for $d = 3$, it becomes cumbersome.  Nevertheless, we try to establish the relationship between the two for  $d > 2$ using numerical techniques.

Motivated by our analysis of the CGLMP inequality, we consider the maximal average success probability of the QRAC game when the decoding measurements possess fixed incompatibility. Mathematically, it can be written as
\begin{eqnarray}
    \overline{\mathcal{R}}_d^{Q}({\tt I}) := \frac{1}{2d^2}~~\max_{\mathcal{I}(B^1,B^2)={\tt I}} \sum_{a_1,a_2=0}^{d-1}||B^1_{a_1}+B^2_{a_2}||_{\infty}.
\end{eqnarray}
A thorough numerical investigation for $d=3$ reveals that $\overline{\mathcal{R}}_3^Q({\tt I})$ is indeed a monotonic function of the incompatibility ${\tt I}$, as shown in Fig.~\ref{fig:rac_d2_d3}. This clearly illustrates the qualitatively different response to measurement incompatibility of the average success probability in the QRAC game and the CGLMP violation reported in the previous section. 

The above investigation raises an important question -- \textit{Is $\mathcal{R}_3^Q$ monotonic with respect to measurement incompatibility?} We find the answer to be negative. Although $\overline{\mathcal{R}}_3^Q({\tt I})$ is monotonic with the measurement incompatibility, when we randomly (Haar uniformly) generate the measurements $B^1$ and $B^2$, we find instances where $\mathcal{R}_3^Q$ is more for less incompatible measurements, as clearly visible from scattered nature of $\mathcal{R}_3^Q$ in Fig.~\ref{fig:rac_d2_d3}.  

We continue to highlight the inequivalence of higher dimensional QRAC and CGLMP violations by taking the maximal CGLMP violating measurement settings at Bob's side (see Eq.~\eqref{eq:opt_meas}), i.e., $B^b_{a_b}=V(\vec{\chi}_b)^\dagger\pi_{a_b}^bV(\vec{\chi}_b)$ as the decoding strategy of QRAC. Now, if the measurement becomes noisy, which can be expressed as
\begin{eqnarray}
    B^b_{a_b} \to \eta B^b_{a_b} + \frac{1-\eta}{d} \mathbb{I},
    \label{eq:noisy_B}
\end{eqnarray}
with $\eta$ being the unsharp parameter as in Eq.~\eqref{eq:unsharp_meas}. Let us define two critical values of $\eta$ that guarantee non-classical features in case of the CGLMP violation and the QRAC game. Specifically, 
\begin{enumerate}
    \item $\eta_{d,r}$ defines the critical value of $\eta$ such that for $\eta>\eta_{d,r}$, we obtain a quantum advantage in the QRAC game, and
    \item $\eta_{d,c}$ denotes the critical value of noise such that  $\eta>\eta_{d,c}$ leads to  violation of the CGLMP inequality.
\end{enumerate}
In the case of the CGLMP scenario, Alice chose the optimal measurement setting corresponding to Eqs.~\eqref{eq:opt_meas},  \eqref{eq:phases_unitary}, and the shared entangled state is chosen to be optimal. 
\begin{figure}
\includegraphics [width=\linewidth]{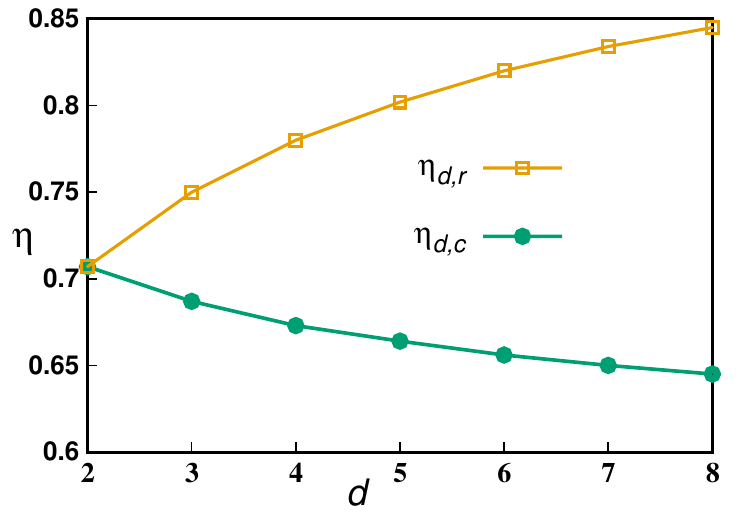}
\caption{(Color online.) \textbf{Critical noise parameter, \(\eta\) (ordinate), responsible for quantum advantage in  QRAC and violation for CGLMP with respect to different dimensions, \(d\) (abscissa)}. In both cases, Bob performs measurement according to Eqs.~\eqref{eq:opt_meas} and \eqref{eq:phases_unitary} which are optimal for extracting nonlocality in the CGLMP scenario and Alice's set of optimal measurements in the CGLMP scenario is given by the same set of equations. Here, \(\eta_{d,r}\) (yellow squares) represents the critical value of \(\eta\) for quantum advantage in the QRAC game while \(\eta_{d,c}\) (green circles) corresponds to the critical value for the same in the context of CGLMP inequality. Here, both the axes are dimensionless.} 
\label{fig:dim_eta_cglmp_rac}
\end{figure}
We find that for $d\geq 3$ $\eta_{d,c}\neq\eta_{d,r}$ (see Table~\ref{tab:eta_r_c}), thereby, again establishing the inequivalence between the two. Interestingly, with increasing $d$, $\eta_{d,r}$ monotonically increases, see Table. \ref{tab:eta_r_c} and Fig.~\ref{fig:dim_eta_cglmp_rac}, indicating a decrease in robustness for noise to the measurements with higher $d$ with respect to the average success probability in the QRAC game. On the contrary, $\eta_{d,c}$ monotonically decreases with the noise in the measurements pointing towards enhanced robustness of the CGLMP violation to measurement noise.

\begin{table}[h!]
\centering
\renewcommand{\arraystretch}{1.5}
\begin{tabular}{|c|c|c|c|}
\hline
\textbf{Dimension} $(d)$ & ~~~~$\eta_{d,c}$~~~~ & ~~~~$\eta_{d,r}$~~~~ \\
\hline

2 & 0.707 & 0.707  \\ \hline
3 & 0.687 & 0.75   \\ \hline
4 & 0.673 & 0.78   \\ \hline
5 & 0.664 & 0.802  \\ \hline
6 & 0.656 & 0.82   \\ \hline
7 & 0.65  & 0.834  \\ \hline
8 & 0.645 & 0.845  \\ \hline
\end{tabular}
\caption{Data points corresponding to Fig.~\ref{fig:dim_eta_cglmp_rac}. The amount of white noise added to the measurement that can sustain quantum advantage in the RAC game $1-\eta_{d,r}$, and the violation of the CGLMP inequality $1 - \eta_{d,c}$ is given for increasing dimensions $(2 \leq d \leq 8)$.
}
\label{tab:eta_r_c}
\end{table}

\subsubsection{Construction  of strategies with $\eta_{d,r}=\eta_{d,c}$}


Having discussed the divergent behavior of QRAC and the CGLMP scenario, let us now address a question which can put them on same footing -- \textit{Given a fixed value of incompatibility, does there exist any strategy in the CGLMP scenario such that violation of the CGLMP inequality implies a quantum advantage in QRAC game and vice versa, i.e., $\eta_{d,r}=\eta_{d,c}$ for $d \geq 2$}?

To answer this query, we Haar uniformly generate measurements $A^1$ and $A^2$ at Alice's side. We fix the measurements at Bob's side to be $B^1=A^1$ and $B^2=A^2$. For noisy measurements on Bob's side as shown in Eq.~\eqref{eq:noisy_B}, the shared state is optimally chosen which remains the same for all values of $\eta$.

In the qubit case, out of $2\times 10^5$ Haar uniformly generated qubit measurement pairs, there does not exist any measurement scheme that can achieve the above mentioned goal except when the measurement set is maximally incompatible,  achieved by MUBs, i.e., $\mathcal{I}(B^1,B^2)=2$. In fact, we also simulate $10^5$ qubit MUB pairs, all of which exhibit identical robustness for CGLMP and QRAC scenarios aligning with our interest. Note that, due to the monotonically increasing nature of quantum advantage with incompatibility in both QRAC and CGLMP frameworks, these projective MUB settings provide  $\mathcal{R}^{\mathcal{Q}}_2 \approx 0.8534 $ and $\chi_2^{\max}\approx2.8284$. 

\begin{figure}
\includegraphics [width=\linewidth]{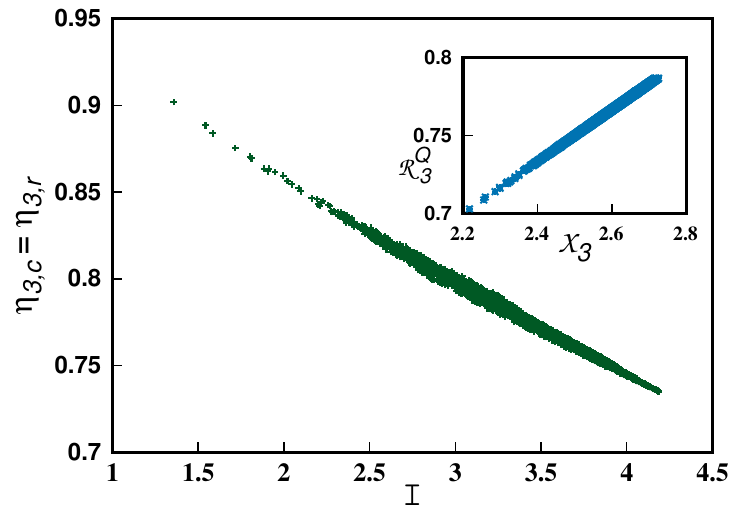}
\caption{(Color online.) The range of \(\eta_{3,c}=\eta_{3,r}\) (ordinate) (quantum advantage in QRAC and violations of CGLMP inequality occur simultaneously) against incompatibility, \({\tt I}\) (abscissa) of Bob's side. Here we study the situation where Alice performs the same measurement strategy as Bob. The inset plot shows the Value of QRAC, \(\mathcal{R}_{3}^{Q}\) with the CGLMP operator, \(\chi_3\) for  measurement settings with $\eta=1$. Here, both the axes are dimensionless.} 
\label{fig:rac_cglmp_d3}
\end{figure}

However, this singular behavior does not exist in a higher dimension. We show that, in the case of $d=3$, there exist measurement schemes for almost the whole range of incompatibility, $1.36\lesssim\mathcal{I}(B^1,B^2)\lesssim 4.19$ at Bob's side such that  $\eta_{3,r}=\eta_{3,c}$ can be found as demonstrated in Fig.~\ref{fig:rac_cglmp_d3} in which $2\times 10^5$ measurement strategies at Bob's side are Haar uniformly generated. Interestingly, we find a finite number of measurement settings $(\approx 2 \%)$ out of the total simulated measurements that satisfies $\eta_{3,r}=\eta_{3,c}$. The characterization of these measurements in qutrit scenario would be an interesting task going forward. Let us take any two these measurement strategies with $\eta=1$, i.e., $\mathbb{B}=\{B^1,B^2\}$ and $\mathbb{B}'=\{B'^1,B'^2\}$. The average success probability in QRAC game corresponding to $\mathbb{B}$ and $\mathbb{B'}$ are denoted by $\mathcal{R}^{\mathcal{Q}}_3(\mathbb{B})$ and $\mathcal{R}^{\mathcal{Q}}_3(\mathbb{B'})$ respectively. Similarly, in the CGLMP scenario, we have $\chi_3(\mathbb{B})$ and $\chi_3(\mathbb{B'})$. The inset of Fig.~\ref{fig:rac_cglmp_d3} shows that for any two pair of  measurement settings  with $\mathcal{R}^{\mathcal{Q}}_3(\mathbb{B})\geq\mathcal{R}^{\mathcal{Q}}_3(\mathbb{B'})$ implies $\chi_3(\mathbb{B})\geq\chi_3(\mathbb{B'})$ and vice versa.

In this way, our analysis reflects that due to equal robustness, given one of these measurement settings if we get a violation of CGLMP inequality, then it certifies quantum advantage in the QRAC game and vice versa. 

\section{Conclusion}
\label{sec:conclusion}
In recent years, a myriad of works have studied the role of incompatibility in the $2 (party) -2 (settings) -2 (outcome) $ Bell scenario, however, the possibility remains open in the higher-dimensional Bell scenario. This work strives to establish a connection between incompatibility and Bell nonlocality beyond the standard $2-2-2$ platform to the $d$ outcome $2-2-d$ scenario by considering the CGLMP inequality. We explicitly demonstrated that the incompatibility in projective measurements is both necessary and sufficient for obtaining the violation of CGLMP inequality in the $d = 3$ scenario and argued for higher dimensions. Specifically, the amount of CGLMP violation with the incompatibility quantifier exhibits contrasting behavior in higher dimensions compared to the $d=2$ picture. In particular, while for $d=2$, there is a monotonic behavior with measurement incompatibility, violation of the CGLMP inequality displays nonmonotonicity with incompatibility in measurements. Moreover, the maximum value of CGLMP violation for a given value of measurement incompatibility at any one side also exhibits this nonmonotonic behavior. This encapsulates one of our central results: \textit{more nonlocality with less incompatibility.} 

Interestingly, the maximally violating state turns out to be nonunique and explicitly depends on the incompatibility of measurements used to extract the Bell statistics. In particular, we identify the maximally violating states to be local unitarily equivalent to a single parameter family of states, expressed as a function of incompatibility. It is in sharp contrast with the $d=2$ Bell scenario, where the maximally entangled state is maximally violating for any value of incompatibility. Additionally, we found that the feature of more nonlocality with less incompatibility exists in the photonic interferometric measurement scheme, thereby raising the possibility of demonstrating this feature in experiments. Deviating from the projective measurement scenario, we proved the insufficiency in the incompatibility of general positive operator-valued measurement for the CGLMP violation with an explicit example.

We also explored the comparison between the Bell scenario and the quantum random access code (QRAC) in the prepare-measure scenario. In the qubit case, i.e., $2^2\to 1$ QRAC, the average success probability is a monotonic function of incompatibility. However, $2^3\to 1$ QRAC shows nonmonotonic characteristics with incompatibility like the qutrit CGLMP scenario. On the other hand, for a fixed incompatibility, the maximum average success probability of QRAC is a monotonic function of incompatibility, which is, in contrast, to the violation of CGLMP inequality for two qutrits. 

Towards connecting the robustness of CGLMP violation with the average success probability in QRAC, we devise noisy measurement schemes that demonstrate nonlocality and quantum advantage in QRAC precisely with equal noise content of the measurements. Interestingly, compared to the qubit scenario where only mutually unbiased bases (MUBs) exhibit such phenomena, we show that in the qutrit case, there exist such measurements for the whole range of incompatibility.

Our work explicitly demonstrates the contrasting role of incompatibility concerning the violation of higher dimensional Bell inequalities and prepare-measure scenarios compared to the two-qubit cases. Our investigations open up new avenues for research into the joint measurability of projective and general positive operator valued measurements (POVMs) in Bell and prepare-measure situations in higher dimensions.

\acknowledgements
SR acknowledges useful discussions with Tamal Guha.  We acknowledge the use of \href{https://github.com/titaschanda/QIClib}{QIClib} -- a modern C++ library for general purpose quantum information processing and quantum computing (\url{https://titaschanda.github.io/QIClib}) and cluster computing facility at Harish-Chandra Research Institute. PH acknowledges ``INFOSYS
scholarship for senior students".

\bibliography{bib.bib}

\appendix

\section{Characterizing  incompatibility}
\label{sec:incom}

Measurement incompatibility refers to the existence of quantum measurements that cannot be performed simultaneously on a single system. Consider a collection of Positive Operator-Valued Measures (POVMs), denoted as \(M^a\equiv\{M^a_j\}_j\), where each element satisfies the following conditions: \(M^a_j \geq 0\), meaning each operator is positive semi-definite, and \(\sum_{j} M^a_j = \mathbb{I}\), where \(\mathbb{I}\) is the identity operator.

\subsection{Commutation-based incompatibility quantifier}



In Ref. \cite{Mordasewicz2022}, a new family of incompatibility measures has been defined. The incompatibility of two measurements $M^1\equiv\{M^1_i\}$ and $M^2\equiv\{M^2_j\}$, acting on $\mathbb{C}^d$, is defined as the Schatten $p$-norms of commutator between the elements of the above set of measurements, which can be written mathematically as
\begin{eqnarray}
    \mathcal{I}_p(M^1,M^2)= \sum_{j=1}^{d_1} \sum_{k=1}^{d_2} \|[M_j^1,M_k^2]\|_p,
\label{eq:incom_p_norm}
\end{eqnarray}
where $p\in[1,\infty]$. For all possible values of $p$, these incompatibility measures are non-negative. When all measurement operators commute between each other, only then $\mathcal{I}_p = 0$ $\forall p$. Therefore, it is evident that if $\mathcal{I}_p = 0$ in the case of projective measurements then the two observables corresponding to two measurements commute. The incompatibility quantifier is upper bounded by $\mathcal{I}_p(M^1,M^2)\leq 2^{\frac{1}{p}}d\sqrt{d-1}$. A pair of MUBs saturate the upper bound in $\mathbb{C}^d$ \cite{Mordasewicz2022}.

When the measurements considered are projective, we get the following result:
\begin{proposition}
     Two sets of projectors $\Pi^A = \{\Pi_i^A\}_{i=1}^{N_A}$ and $\Pi^B =\{\Pi_j^B\}_{j=1}^{N_B}$  are jointly measurable if and only if $\mathcal{I}_p(\Pi^A,\Pi^B) = 0$ for all $p$.
\end{proposition}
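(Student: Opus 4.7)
The plan is to reduce the claim to the classical fact that two projective measurements are jointly measurable if and only if all of their effects commute pairwise, and then to observe that $\mathcal{I}_p$ is simply a sum of Schatten $p$-norms of the very same commutators.

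First, since $\|\cdot\|_p$ is a bona fide norm on matrices for every $p\in[1,\infty]$, one has $\mathcal{I}_p(\Pi^A,\Pi^B)=0$ if and only if $[\Pi^A_i,\Pi^B_j]=0$ for every pair $(i,j)$. In particular, the vanishing condition is actually $p$-independent, so ``for all $p$'' in the statement is automatically equivalent to ``for some $p$''. It therefore suffices to identify joint measurability of the projective sets with pairwise commutativity of their effects.

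For the direction commutativity $\Rightarrow$ joint measurability, I would write down the canonical joint POVM $G_{ij}:=\Pi^A_i\Pi^B_j$. Products of commuting projectors are projectors, so each $G_{ij}$ is positive semidefinite. Using $\sum_i\Pi^A_i=\sum_j\Pi^B_j=\mathbb{I}$, one directly verifies $\sum_j G_{ij}=\Pi^A_i$ and $\sum_i G_{ij}=\Pi^B_j$, so $\{G_{ij}\}$ is a joint observable with the correct marginals.

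For the converse, suppose a joint POVM $\{G_{ij}\}$ with those marginals exists. Positivity of the remaining terms in $\sum_{j'}G_{ij'}=\Pi^A_i$ forces the operator inequality $0\leq G_{ij}\leq \Pi^A_i$. Applying the standard fact that $0\leq G\leq P$ with $P$ a projector implies $G=PGP$ — which follows by testing $\langle v,(P-G)v\rangle\geq 0$ on $v\in\ker P$ together with $G\geq 0$ — gives $G_{ij}=\Pi^A_iG_{ij}=G_{ij}\Pi^A_i$, and symmetrically $G_{ij}=\Pi^B_jG_{ij}=G_{ij}\Pi^B_j$. Combined with the orthogonality $\Pi^A_i\Pi^A_{i'}=\delta_{ii'}\Pi^A_i$ of a projective measurement, this lets me compute
\begin{equation*}
\Pi^A_i\Pi^B_j=\Pi^A_i\sum_{i'}G_{i'j}=\sum_{i'}\Pi^A_i\Pi^A_{i'}G_{i'j}=\Pi^A_iG_{ij}=G_{ij},
\end{equation*}
and an identical calculation with the roles swapped yields $\Pi^B_j\Pi^A_i=G_{ij}$. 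Hence $[\Pi^A_i,\Pi^B_j]=0$ for every pair, and therefore $\mathcal{I}_p(\Pi^A,\Pi^B)=0$ for every $p$. The only delicate step is the operator-order lemma $0\leq G\leq P\Rightarrow G=PGP$, but this is textbook and poses no real obstacle; the substantive content of the proposition is really the observation that joint measurability of projective measurements collapses to commutativity of their effects, a feature which $\mathcal{I}_p$ detects faithfully.
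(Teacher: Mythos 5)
Your argument is correct, but it follows a genuinely different route from the paper's. You work entirely at the level of effects and the joint-observable (marginal) formulation of compatibility: for sufficiency you exhibit the canonical joint POVM $G_{ij}=\Pi^A_i\Pi^B_j$, and for necessity you derive pairwise commutativity from the existence of any joint POVM via the operator-order lemma $0\leq G\leq P\Rightarrow G=PGP$ together with orthogonality of the projective effects. The paper instead lifts the two projector sets to observables through their spectral supports, invokes the standard theorem that sharp observables are jointly measurable iff they commute, and then uses the arbitrariness of the spectral coefficients $\lambda^{A}_i,\lambda^{B}_j$ to force every individual commutator $[\Pi^A_i,\Pi^B_j]$ to vanish. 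Your version is more self-contained and arguably sharper: it does not presuppose the observable-level commutativity criterion (in effect it reproves it), and the only external ingredients are the norm axioms for $\|\cdot\|_p$ (which, as you note, make the ``for all $p$'' clause equivalent to ``for some $p$'') and the textbook lemma about $0\leq G\leq P$. The paper's route is shorter once that classical theorem is granted, and it connects more directly to the spectral-support language used elsewhere in the manuscript. The one point worth making explicit in your write-up is that the paper defines compatibility via a parent POVM with classical post-processing, so you should record the standard equivalence between that definition and the joint-POVM-with-marginals formulation you use (given a parent $\{G_\lambda\}$ and post-processings $p(i|A,\lambda)$, $p(j|B,\lambda)$, set $G_{ij}=\sum_\lambda p(i|A,\lambda)\,p(j|B,\lambda)\,G_\lambda$); with that remark added, your proof is complete.
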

\begin{proof}
Before proceeding with the proof we first define the {\tt spectral support} for any arbitrary observable.  
\begin{definition}
    The {\tt Spectral Support} of any observable $\mathcal{O}$ is the set $\mathcal{S}_\mathcal{O}$ consisting of all the projectors in the spectral decomposition of $\mathcal{O}$. 
    \label{def:specsupp}
\end{definition}
\noindent Now consider two sets of projectors $\Pi^A = \{\Pi_i^A\}_{i=1}^{N_A}$ and $\Pi^B =\{\Pi_j^B\}_{j=1}^{N_B}$.
Let us take two bounded observables $O^A$ and $O^B$ that satisfy
\begin{eqnarray}
 {\tt Span}   (\mathcal{S}_{O^{A}}) \subseteq {\tt Span}(\Pi^{A}), ~{\tt Span}(\mathcal{S}_{O^{B}}) \subseteq {\tt Span}(\Pi^{B}).
    \label{eq:ssup}
\end{eqnarray}
The most general form of $O^A$ and $O^B$ satisfying Eq. \eqref{eq:ssup} is given by
\begin{eqnarray}
    O^A = \sum_{i=1}^{N_A} \lambda_i^A \Pi_i^A, ~~~
    O^B = \sum_{j=1}^{N_B} \lambda_j^B \Pi_j^B,
    \label{eq:spectral}
\end{eqnarray}
where $\lambda_k^{a(b)}$ are arbitrary complex numbers. 

We know that $O^A$ and $O^B$ are jointly measurable if and only if $[O^A,O^B]= 0$. Using Eq. \eqref{eq:spectral}, the condition of joint measurability between $O^A$ and $O^B$ can be rewritten as 
\begin{eqnarray}
    [O^A,O^B]= \sum_{i,j=1}^{N_A,N_B} \lambda_i^A\lambda_j^B [\Pi_i^A,\Pi_j^B] = 0.
\end{eqnarray}
If $\Pi^A$ and $\Pi^B$ are jointly measurable, then the above equation holds for all $O^A$ and $O^B$ that satisfy Eq. \eqref{eq:ssup}, i.e., for all $\lambda_k^{A(B)}$s. 
Physically, this follows from the fact that all observables that can be measured by jointly measurable instruments (projectors) must be jointly measurable.
This is possible if and only if
\begin{eqnarray}
    [\Pi_i^A,\Pi_j^B] = 0 ~~\forall i,j.
\end{eqnarray}
This right away implies 
\begin{eqnarray}
     [O^A,O^B]= 0 \iff [\Pi_i^A,\Pi_j^B] = 0 ~~~\forall i,j.
     \label{eq:iff}
\end{eqnarray}
Now, from Eq. \eqref{eq:incom_p_norm}, we have $\mathcal{I}_p(\Pi^A,\Pi^B) = 0$ for all $p \in (0,\infty]$. This completes the proof of joint measureability of $\Pi^A$ and $\Pi^B$ $\implies$ $\mathcal{I}_p(\Pi^A,\Pi^B) = 0$. 

Next, we attempt to prove the converse. We consider two arbitrary sets of projective measurements $\Pi^A$ and $\Pi^B$, such that $\mathcal{I}_p(\Pi^A,\Pi^B) = 0$ for all $p \in (0,\infty]$. 
Then for all observables satisfying the condition in Eq. \eqref{eq:ssup}, from Eq. \eqref{eq:iff}, we have $O^A$ and $O^B$ to be jointly measurable. Therefore, $\Pi^A$ and $\Pi^B$ must be jointly measurable. This completes the proof.

\end{proof}
Next, we briefly discuss the notion of joint measureability for generalized measurements. 

\subsection{Robustness of incompatibility}

The POVMs \(M_a\) are said to be jointly measurable or compatible if there exists a parent POVM \(\{G_\lambda\}\) and a set of conditional probabilities \(\{p(j|a, \lambda)\}\) such that the following compatibility condition holds for all outcomes \(j\) and measurement settings \(a\):
\begin{eqnarray}
    M_j^a = \sum_{\lambda} p(j|a, \lambda) G_\lambda,
    \label{eq:compatible_measurement}
\end{eqnarray}
where \(\lambda\) represents some hidden variable. In the absence of such a relationship, the collection of POVMs is considered incompatible. Consider the noisy POVMs \(M^a_{j,\eta}\) and \(N^a_{i,\eta}\) with effects \(M^a_{j,\eta} = \eta M^a_{j} + (1 - \eta) \operatorname{tr}M^a_{j} \, \frac{\mathbb{I}}{d}\) (similarly for \(N^a_{i,\eta}\)). Then, the incompatibility robustness \(\eta^*\) is the largest \(\eta\) such that \(M^a_{j,\eta}\) and \(N^a_{i,\eta}\) are jointly measurable (that is, there exists a parent POVM, \(\{G_{ji}\}_{j,i=1}^d\), where \(d_1 = d_2 = l\) such that \(\sum_i G_{ji} = M^a_{j,\eta}\) and \(\sum_j G_{ji} = N^a_{i,\eta}\)). Recently an analytic upper bound on \(\eta^*\) has been derived for an arbitrary set of POVMs \cite{Designolle2019}. For a pair of POVMs the bound read as
\begin{widetext}
\begin{eqnarray}
    \eta^{*}\leq \frac{l^2 \text{max}_{ji}\|M^a_{j} , N^a_{i}\|-\sum_{j}(\text{tr}M^a_{j})^2 -\sum_{i}(\text{tr}N^a_{i})^2}{l \sum_{j}\text{tr}(M^a_{j})^2 + d\sum_{i}\text{tr}(N^a_{i})^2 -\sum_{j}(\text{tr}M^a_{j})^2 -\sum_{i}(\text{tr}N^a_{i})^2}.
    \label{eq:upper_bound_robustness}
\end{eqnarray}
\end{widetext}

\section{{Proof of Proposition \ref{prop:necessary}}}
\label{app:necessary}
We first notice that given a projective measurement $M^1\equiv\{M^1_i\}_i$ the only compatible measurement is $M^1$ itself according to the definition of Schatten $p$-norm of incompatibility (Eq.~\eqref{eq:incom_p_norm_1}). For bipartite qutrit system, the CGLMP operator, $\mathbf{C}_3$, can be written as
\begin{eqnarray}
    \nonumber \mathbf{C}_3 &=& (A^1_0-A^1_2+A^2_2-A^2_0)B^1_0 + (A^1_1-A^1_0+A^2_0-A^2_1)B^1_1 \\ \nonumber &+& (A^1_2-A^1_1+A^2_1-A^2_2)B^1_2 + (A^2_0-A^2_2+A^1_0-A^1_1)B^2_0 \\ \nonumber &+& (A^2_1-A^2_0+A^1_1-A^1_2)B^2_1 + (A^2_2-A^2_1+A^1_2-A^1_0)B^2_2. \\ 
\end{eqnarray}
If measurements on Alice's side are compatible, both the measurement choices have the same set of projectors.
Let us first choose $A^1_0=A^2_0$, $A^1_1=A^2_1$, $A^1_2=A^2_2$. In this situation, the CGLMP operator reduces to
\begin{eqnarray}
   \mathbf{C}_3 = 3\mathbb{H} - \mathbb{I},
   \label{eq:comp_c3}
\end{eqnarray}
where $\mathbb{H} = A^1_0 B^2_0 + A^1_1 B^2_1 + A^1_2 B^2_2$ and $\mathbb{H}^2 = \mathbb{H}$. Therefore, the maximum possible value of $\mathbf{C}_3$ can be $2$.
Similarly, if Alice performs incompatible projective measurements, but Bob's projective measurements are compatible, i.e., $B^1_0=B^2_0$, $B^1_1=B^2_1$, $B^1_2=B^2_2$, then also $\mathbf{C}_3$ follows Eq.~\eqref{eq:comp_c3}. 

By relabeling, if we choose either $\{A^1_0=A^2_2$, $A^1_1=A^2_0$, $A^1_2=A^2_1\}$ or $\{B^1_0=B^2_2$, $B^1_1=B^2_0$, $B^1_2=B^2_1\}$ then we find $\mathbf{C}_3 = 3\tilde{\mathbb{H}}-\mathbb{I}$ where $\tilde{\mathbb{H}}$ have different expression compared to $\mathbb{H}_3$ but satisfies $\tilde{\mathbb{H}}^2=\tilde{\mathbb{H}}$.

On the other hand, for the case $A^1_0=A^2_1$, $A^1_1=A^2_2$, $A^1_2=A^2_0$, the CGLMP operator follows
\begin{eqnarray}
    \mathbf{C}_3 = 2\mathbb{I} - 3\mathbb{H}_1 - 3\mathbb{H}_2,
\end{eqnarray}
with $\mathbb{H}_1 = A^1_2 B^1_0 + A^1_0 B^1_1 + A^1_1 B^1_2$ and $\mathbb{H}_2 = A^1_2 B^2_1 + A^1_0 B^2_2 + A^1_1 B^2_0$ and $\mathbb{H}_i^2=\mathbb{H}_i$ for $i=1,2$. Notice that, $\mathbb{H}_1$ and $\mathbb{H}_2$ are positive projective operators. Therefore, $\mathbf{C}_3$ can have eigenvalues less than or equal to $2$.
Similarly, by choosing $B^1_0=B^2_1$, $B^1_1=B^2_2$, $B^1_2=B^2_0$ we get $\mathbf{C}_3 = 2\mathbb{I} - 3\mathbb{H}_1 - 3\tilde{\mathbb{H}}_2$ with $\tilde{\mathbb{H}}_2^2=\tilde{\mathbb{H}}_2$.

Therefore, if any one of the parties, Alice or Bob, or both of them perform compatible projective measurements then no violation of the CGLMP inequality can be obtained.

\section{Relationship between CHSH violation and incompatibility}
\label{app:chi2(I)}

Consider two sets of qubit projective measurements 
\begin{eqnarray}
   E &=& \{ \ketbra{e}{e}, \ketbra{\bar{e}}{\bar{e}}\}, \nonumber \\
   F &=& \{ \ketbra{f}{f}, \ketbra{\bar{f}}{\bar{f}}\}.
\end{eqnarray}
Without loss of any generality, we align our axes such that $\ket{e} = \ket{0}$, i.e., along the $z$-axis, and $\ket{f}$ to be in the $x-z$-plane. Therefore, we can parameterize 
\begin{eqnarray}
    \ket{f} = \cos \frac{\theta}{2}\ket{0} + \sin \frac{\theta}{2}\ket{1}.
\end{eqnarray}
Now using Eq. \eqref{eq:ipf(x)} and noting $|\langle e|f\rangle| = \cos \frac{\theta}{2}$, we have
\begin{eqnarray}
    \mathcal{I}_p(E,F) = 2^{\frac1p} 2 \sin \theta.
    \label{eq:ipf(theta)}
\end{eqnarray}
Let us construct two observables $A_0$ and $A_1$ such that
\begin{eqnarray}
    O_E &=& \ketbra{e}{e} - \ketbra{\bar{e}}{\bar{e}} = \sigma_z = \hat{n}_z.\vec{\sigma}, \nonumber \\
    O_F &=& \ketbra{f}{f} - \ketbra{\bar{f}}{\bar{f}}   = \hat{n}_{\theta}.\vec{\sigma},
\end{eqnarray}
where $\hat{n}_z = (0,0,1)^T$ and $\hat{n}_\theta = (\sin \theta,0,\cos \theta)^T$. Now, we have
\begin{eqnarray}
    [O_E,O_F] = [\hat{n}_z.\vec{\sigma},\hat{n}_\theta.\vec{\sigma}] = 2 i\sin \theta \sigma_y.
\end{eqnarray}
The above analysis leads to the following identity
\begin{eqnarray}
    ||[O_E,O_F]|| = 2 \sin \theta = \frac{\mathcal{I}_p(E,F)}{2^\frac1p}.
    \label{eq:comnorm}
\end{eqnarray}

Now, in a bipartite setting, if we consider two observables each for Alice $(A^1,A^2)$ and Bob $(B^1,B^2)$ in $d = 2$, the maximal value of the CHSH inequality can be expressed as \cite{Landau1987}
\begin{eqnarray}
    \chi_2 = \sqrt{4 + ||[A^1,A^2]||\times||[B^1,B^2]||}.
\end{eqnarray}
From Eq. \eqref{eq:comnorm} using Def. \ref{def:specsupp}, we can write the above expression as
\begin{eqnarray}
    \chi_2 = 2\sqrt{1 + \frac{\mathcal{I}_p(\mathcal{S}_{A^1},\mathcal{S}_{A^2})\mathcal{I}_p(\mathcal{S}_{B^1},\mathcal{S}_{B^2})}{4.2^\frac2p}},
\end{eqnarray}
whereas in Def. \ref{def:specsupp}, $\mathcal{S}_O$ is the {\tt spectral support} of $O$, i.e.,   $\mathcal{S}_O$ is the set containing all the projectors in the spectral decomposition of $\mathcal{O}$.

\section{Parameterization of local unitary, U(3)}
\label{sec:d3_local_unitary}

The local unitary \(U(3)\)  can be parameterized as follows \cite{Gil2018},

\begin{widetext}
   \begin{eqnarray}
U_{i}(\phi_i, \theta_i, \varphi_i, \chi_i, \mu_i, \alpha^{i}_1, \alpha^{i}_2, \alpha^{i}_3, \beta^{i}_2)
= \mathbf{Q}_i  
\begin{pmatrix}
 e^{i \alpha^{i}_1}\cos \chi_i & i e^{i \alpha^{i}_2} \cos \mu_i \sin \chi_i & i e^{i \alpha^{i}_3} \sin \mu_i \sin \chi_i \\
i e^{i \alpha^{i}_1} \sin \chi_{i} & e^{i \alpha^{i}_2} \cos \mu_{i} \cos \chi_i & e^{i \alpha^{i}_3} \sin \mu_i \cos \chi_i \\
0 & e^{i \beta^{i}_2} \sin \mu_i & -e^{i (\beta^{i}_2 - \alpha^{i}_2 + \alpha^{i}_3)} \cos \mu_i
\end{pmatrix} 
\mathbf{Q}_{i}^T\nonumber\\
\end{eqnarray} 
where \(\mathbf{Q}_i=\mathbf{Q}_{\phi_i} \mathbf{Q}_{\theta_i} \mathbf{Q}_{\varphi_i}\), and 

\[
\mathbf{Q}_{\varphi_i} \equiv 
\begin{pmatrix}
\cos\varphi_i & -\sin\varphi_i & 0 \\
\sin\varphi_i & \cos\varphi_i & 0 \\
0 & 0 & 1
\end{pmatrix}, \quad
\mathbf{Q}_{\theta_i} \equiv 
\begin{pmatrix}
\cos\theta_i & 0 & -\sin\theta_i \\
0 & 1 & 0 \\
\sin\theta_i & 0 & \cos\theta_i
\end{pmatrix}, \quad
\mathbf{Q}_{\phi_i} \equiv 
\begin{pmatrix}
\cos\phi_i & -\sin\phi_i & 0 \\
\sin\phi_i & \cos\phi_i & 0 \\
0 & 0 & 1
\end{pmatrix}.
\]

and the range of the parameters, \(
-\pi < \phi_i \leq \pi, \quad -\pi/2 \leq \theta_i \leq \pi/2, \quad 0 \leq \varphi_i \leq \pi, \quad -\pi/4 \leq \chi_i \leq \pi/4,\quad 0 \leq \mu_i \leq \pi/2, \quad 0 \leq \alpha^{i}_1, \alpha^{i}_2, \alpha^{i}_3, \beta^{i}_2 \leq \pi\
\) 
\end{widetext}

\section{Average success probability in the $2\to1$ RAC game}
\label{app:rac221}
Let us consider two arbitrary sets of projective measurements for a single qubit as
\begin{eqnarray}
    B^1 &=& \{\ketbra{e}{e},\ketbra{\bar{e}}{\bar{e}}\}, \nonumber \\
    B^2 &=& \{\ketbra{f}{f},\ketbra{\bar{f}}{\bar{f}}\},
\end{eqnarray}
where $\ketbra{\bar{e}}{\bar{e}}=\mathbb{I}-\ketbra{e}{e}$ and $\ketbra{\bar{f}}{\bar{f}}=\mathbb{I}-\ketbra{f}{f}$. 
To proceed, we invoke the following identities \cite{Mordasewicz2022}
\begin{eqnarray}
        ||[\ketbra{q}{q},\ketbra{r}{r}]||_p &=& 2^{\frac1p} c\sqrt{1-c^2}, \nonumber \\
    ||\ketbra{q}{q}+\ketbra{r}{r}||_{\infty} &=& 1+c, 
    \label{eq:formulaoverlap}
\end{eqnarray}
where $c=|\langle q|r\rangle|$.
Using the features 
\begin{eqnarray}
   |\langle e|f\rangle| &=& |\langle \bar{e}|\bar{f}\rangle| = x, \nonumber\\ 
   |\langle e|\bar{f}\rangle| &=& |\langle \bar{e}|f\rangle| = \sqrt{1-x^2},
   \label{eq:formulaoverlap2}
\end{eqnarray}
for the measurements $B_1$ and $B_2$ and Eq. \eqref{eq:formulaoverlap}, the average success probability of the QRAC game, when two arbitrary projectors $B^1$ and $B^2$ are used for decoding, from Eq. \eqref{eq:racsuccessprob} is given by
\begin{eqnarray}
    \mathcal{R}^Q_2 = \frac14 (2+ x + \sqrt{1-x^2}).
    \label{eq:rq}
\end{eqnarray}
On the other hand, noting Eqs. \eqref{eq:formulaoverlap} and \eqref{eq:formulaoverlap2}, the incompatibility between $B^1$ and $B^2$ using Eq. \eqref{eq:incom_p_norm_1} can be expressed as
\begin{eqnarray}
    \mathcal{I}_p\equiv\mathcal{I}_p(B^1,B^2) = 2^{2 + \frac1p} x\sqrt{1-x^2},
    \label{eq:ipf(x)}
\end{eqnarray}
where we have used the shorthand $\mathcal{I}_p$ for $\mathcal{I}_p(B_1,B_2)$.
Comparing between Eqs.~\eqref{eq:rq} and \eqref{eq:ipf(x)}, we obtain a relation between $\mathcal{R}^Q_2$ and $\mathcal{I}_p$ as 
\begin{eqnarray}
    \mathcal{R}^Q_2 = \frac12 + \frac14 \sqrt{1+ \frac{\mathcal{I}_p}{2^{1+\frac1p}}}.
\end{eqnarray}

\end{document}